\newtheorem{theorem}{Theorem}
\newtheorem{lemma}[theorem]{Lemma}
\newtheorem{corollary}[theorem]{Corollary}
\newtheorem{definition}[theorem]{Definition}
\newcommand{\polylog}{\mathrm{polylog}}
\newcommand{\calP}{\mathcal{P}}
\newcommand{\distP}{\mathbb{P}}
\newcommand{\dst}{\mbox{DST}\xspace}
\newcommand{\kdst}{\mbox{$k$-DST}\xspace}
\newcommand{\tdst}{\mbox{$2$-DST}\xspace}
\newcommand{\kdss}{\mbox{$k$-DSS}\xspace}
\newcommand{\tdss}{\mbox{$2$-DSS}\xspace}
\newcommand{\gst}{\mbox{GST}\xspace}
\newcommand{\kgst}{\mbox{$k$-GST}\xspace}
\newcommand{\kgstx}{\mbox{$k$-GST*}\xspace}
\newcommand{\tgst}{\mbox{$2$-GST}\xspace}
\newcommand{\tgstx}{\mbox{$2$-GST*}\xspace}
\newcommand{\hE}{\hat{E}}
\newcommand{\hV}{\hat{V}}
\newcommand{\hT}{\hat{T}}
\newcommand{\hH}{\hat{H}}
\newcommand{\hS}{\hat{S}}
\newcommand{\hf}{\hat{f}}
\newcommand{\hx}{\hat{x}}
\newcommand{\he}{\hat{e}}
\newcommand{\hv}{\hat{v}}
\newcommand{\hu}{\hat{u}}
\newcommand{\hr}{\hat{r}}
\renewcommand{\setminus}{-}
\newcommand{\eps}{\varepsilon}
 \newcommand{\bun}[1]{\textcolor{blue}{#1}}
  \def\rem#1{{\marginpar{\raggedright\scriptsize #1}}}
   \newcommand{\fabr}[1]{\rem{\textcolor{red}{$\bullet$ #1}}}
   \newcommand{\bunr}[1]{\rem{\textcolor{blue}{$\bullet$ #1}}}
  \newcommand{\bun}[1]{#1}
  \newcommand{\fabr}[1]{}
  \newcommand{\bunr}[1]{}
\begin{document}

\title{Surviving in Directed Graphs:\\ A Polylogarithmic Approximation for\\ 
       Two-Connected Directed Steiner Tree}

\date{\today}
\author{
Fabrizio Grandoni\thanks{
IDSIA, University of Lugano,  Switzerland, {\tt fabrizio@idsia.ch}. Partially supported by the ERC Starting Grant NEWNET 279352 and the SNSF Grant APPROXNET 200021\_159697/1.}
\and
Bundit Laekhanukit\thanks{
Weizmann Institute of Science, Israel, {\tt bundit.laekhanukit@weizmann.ac.il}. Partially supported by the ISF (grant No. 621/12) and by the I-CORE Program (grant No. 4/11).}
}

\maketitle
\thispagestyle{empty}

\begin{abstract}
Real-word networks are often prone to failures.
A reliable network needs to cope with this situation and must
provide a backup communication channel.
This motivates the study of \emph{survivable network design},
which has been a focus of research for a few decades.
To date, survivable network design problems on undirected graphs 
are well-understood. For example, there is a $2$ approximation 
in the case of edge failures [Jain, FOCS'98/Combinatorica'01].
%
The problems on {\em directed graphs}, in contrast, 
have seen very little progress.
Most techniques for the undirected case like 
primal-dual and iterative rounding methods do not seem
to extend to the directed case.
Almost no non-trivial approximation algorithm is known
even for a simple case where we wish to design a network
that tolerates a single failure.

In this paper, we study a survivable network design problem on 
directed graphs, 2-Connected Directed 
Steiner Tree (\tdst): given an $n$-vertex weighted directed graph, a
root $r$, and a set of $h$ terminals $S$,
find a min-cost subgraph $H$ that has
two edge/vertex disjoint paths from $r$ to any $t\in S$.
\tdst is a natural generalization of the classical   
Directed Steiner Tree problem (\dst), where we have an additional
requirement that the network must tolerate one failure.
No non-trivial approximation is known for \tdst.
This was left as an open problem by Feldman~et~al., [SODA'09; JCSS]
and has then been studied by Cheriyan~et~al. 
[SODA'12; TALG] and Laekhanukit [SODA'14]. However, 
no positive result was known except for the special case of
a $D$-shallow instance [Laekhanukit, ICALP'16].

We present an $O(D^3\log D\cdot h^{2/D}\cdot \log n)$ approximation algorithm for \tdst that runs in time $O(n^{O(D)})$, for any $D\in[\log_2h]$. This implies a polynomial-time $O(h^\eps \log n)$ approximation for any constant $\eps>0$, and a poly-logarithmic approximation running in quasi-polynomial time. We remark that this is essentially the best-known even for the classical \dst, and the latter problem is $O(\log^{2-\eps}n)$-hard to approximate [Halperin and Krauthgamer, STOC'03]. As a by product, we obtain an algorithm with the same
approximation guarantee for the $2$-Connected Directed Steiner Subgraph problem, where the goal is to find a min-cost subgraph such that every pair of terminals are $2$-edge/vertex connected. 


Our approximation algorithm is based on a careful combination of several techniques. In more detail, we decompose an optimal solution into two (possibly not edge disjoint) \emph{divergent trees} that induces two edge disjoint paths from the root to any given terminal. These divergent trees are then embedded into a shallow tree by means of Zelikovsky's height reduction theorem. On the latter tree we solve a 2-Connected Group Steiner Tree problem and then map back this solution to the original graph. Crucially, our tree embedding is achieved via a probabilistic mapping guided by an LP: This is the main technical novelty of our approach, and might be useful for future work.
\end{abstract}

\newpage

\clearpage
\setcounter{page}{1}

\section{Introduction}
\label{sec:intro}

Real-world networks are often prone to link or node failures. 
A reliable network needs to cope with this situation and must
provide a backup communication channel.
In mathematical terms, ones wish to design a network that
provides a pre-specified number of edge/vertex disjoint paths
between given pairs of {\em terminals}.
This motivates the study of \emph{survivable network design},
which has been a focus of research for a few decades
\cite{SWK69,FJ81,GGPSTW94,J01}.

To date, the survivable network design problems on undirected graphs
are well-understood, and many powerful techniques have been developed
to solve this class of problems. For example, in the edge failure case, there is a $2$-approximation algorithm by Jain~\cite{J01} for the most general version of the problem, {\em Generalized Steiner Network}.
%
In contrast, there has been very slow progress on survivable network
design in {\em directed graphs}.
Most of the standard techniques like primal-dual and iterative
rounding methods do not seem to extend to the directed case.
Positive results are known only for very restricted cases
(see, e.g., \cite{Dahl93,MT04,Gabow07,L16-kdst}).
In fact, there are almost no positive results 
for survivable network design on directed graphs
in the present of Steiner vertices.



In this paper, we focus on arguably one of the simplest survivable
network design problems in directed graphs, namely, 2-Connected Directed
Steiner Tree (\tdst):

\begin{definition}\label{def:kdst}
In the $2$-connected Directed Steiner Tree problem (\tdst),
we are given an $n$-vertex directed graph $G=(V,E)$ with edge-costs $\{c_e\}_{e\in E}$, a root vertex $r$ and a set of $h$ terminals $S\subseteq V\setminus\{r\}$. The goal is to find a min-cost subgraph $H$ that has at least $2$ edge disjoint paths from $r$ to each $t\in S$.
\end{definition}

Intuitively, the goal of \tdst is to design a network that can function
in the event of one edge failure (thus, it must provide a backup path).
\tdst is a natural generalization of the classical Directed Steiner
Tree problems (\dst), where only one $r,t$-path for each $t\in S$ is
required to exist in $H$. 
Feldman~et~al.~\cite{FKN12} left approximating \tdst as an open problem
\bun{(see also the earlier work in \cite{Dahl93})}, 
and the problem has later been studied in the work
of Cheriyan~et~al. \cite{CLNV14}
and Laekhanukit \cite{L14,L16-kdst}.
However, there was no known non-trivial approximation algorithm for 
\tdst except for the special case of $D$-shallow instances 
(where the length of any root-to-terminal path in 
the optimal solution is at most $D$\footnote{
A $D$-shallow instance is an instance that has an optimal solution $H$
such that, for every terminal $t$, $H$ has $k$ edge-disjoint $r,t$-paths
in which each path has length at most $D$ (i.e., all the $k$ paths
are short). This imitates the notion of the height of a tree, but it
allows $H$ to contain a directed cycle.})~\cite{L16-kdst}.

Here we define \tdst in terms of edge-connectivity.
The vertex-connectivity version is defined analogously, but
we are asked for vertex-disjoint instead of edge-disjoint paths.
%
\bun{The two variants share the same approximability in directed graphs.}
There is a simple reduction that reduces 
the vertex-connectivity version to edge-connectivity version\footnote{In 
more detail, split each vertex $v$ into $v^{in}$ and $v^{out}$, add
a zero-cost edge $v^{in}v^{out}$, and then re-wire each edge
entering and leaving $v$ to $v^{in}$ and $v^{out}$, respectively. A
source-sink pairs $(s,t)$ is then replaced by the pair
$(s^{out},t^{in})$. The number of pairs does not change, and the
number of vertices grows by a factor $2$.} and vice versa. 
We will therefore focus only on the edge connectivity \bun{case}.

\subsection{Our Results and Techniques}

The main contribution of this paper is a non-trivial approximation algorithm for \tdst. 
\begin{theorem} \label{thm:main-2dst}
For any $D\in[\log_2{h}]$, there exists a randomized $O(D^3\log D\cdot h^{2/D}\cdot \log n)$ approximation algorithm for \tdst that runs in $n^{O(D)}$ time.
\end{theorem}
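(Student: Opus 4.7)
The plan is to reduce \tdst on the input graph $G$ to a $2$-Connected Group Steiner Tree (\tgst) instance on a shallow tree, following the pipeline sketched in the abstract. Starting from an optimal solution $H^*$, I would first invoke a directed variant of Edmonds' branching theorem to decompose $H^*$ into two \emph{divergent arborescences} $T_1,T_2$ rooted at $r$: these may share edges, but for every terminal $t\in S$ the unique $r,t$-paths in $T_1$ and $T_2$ are edge-disjoint, and $\cost(T_1)+\cost(T_2) \le 2\cdot\mathrm{OPT}$. Next, I would apply Zelikovsky's height-reduction theorem to each $T_i$, producing a depth-$D$ tree whose internal nodes correspond to sub-arborescences of $T_i$ and whose root-to-leaf paths encode $r,t$-paths in $G$; this step costs a factor $O(D\cdot h^{1/D})$ per tree, i.e.\ $O(D^2\cdot h^{2/D})$ in total. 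Each terminal is then represented by a \emph{group} of leaves (one per copy produced by the reduction), which naturally yields a \tgst instance on the shallow tree that can be approximated within an $O(D\log D\cdot \log n)$ factor by an LP-rounding subroutine running in time $n^{O(D)}$.

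The central technical novelty is how the two height reductions are \emph{coupled} so that pulling the \tgst solution back to $G$ yields two root-to-terminal paths that are edge-disjoint in $G$, not merely structurally distinct in the shallow tree. I would solve an LP relaxation of \tdst on $G$ and use its fractional support as a guide for a randomized embedding: at every refinement step of the height reduction, the choice of representative sub-arborescence is sampled according to LP-derived marginals, and the samplings for $T_1$ and $T_2$ are correlated so that, with constant probability for each terminal, the two lifted paths do not share any edge of $G$. Boosting via $O(\log n)$ independent trials and union-bounding over the $h$ terminals yields simultaneous success for all terminals; composing all loss factors gives the claimed $O(D^3\log D\cdot h^{2/D}\cdot \log n)$ approximation in $n^{O(D)}$ time.

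The hard part will be constructing this LP-guided embedding. The obstacles are: (i)~designing an LP relaxation whose value is $O(\mathrm{OPT})$ and whose constraints encode both divergence of the two arborescences and compatibility with each level of the height reduction; (ii)~arguing that its marginals support a valid sampling of sub-arborescence refinements at every level, so that feasibility is preserved in expectation; and (iii)~showing that the edge-disjointness of the two lifted paths holds with sufficient probability and composes cleanly across the $D$ levels of the shallow tree, which is where the $D^3\log D$ factor should emerge. Once this probabilistic LP-rounding module is established, the remaining ingredients—the Edmonds-type decomposition into divergent trees, Zelikovsky's height reduction, and the shallow-tree \tgst subroutine—are essentially off-the-shelf, so the bulk of the work will lie in the design and analysis of the LP and the coupled sampling scheme it induces.
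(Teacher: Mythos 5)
Your high-level pipeline matches the paper's: decompose the optimum into two divergent arborescences, apply Zelikovsky height reduction, set up a \tgst instance on a shallow tree guided by an LP, round, and boost. But there are two substantive gaps worth flagging.

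First, the decomposition you invoke is not Edmonds' branching theorem. Edmonds requires the root to have $k$ edge-disjoint paths to \emph{every} vertex of the subgraph, which a \tdst solution does not satisfy (only the terminals are $2$-connected from $r$; Steiner vertices need not be). The paper uses the Divergent Steiner Trees Theorem of Georgiadis--Tarjan/Kov\'acs, which gives two arborescences $T_1,T_2$ that may share edges but induce edge-disjoint $r,t$-paths for each terminal $t$. This distinction is not cosmetic: Edmonds' trees would be globally edge-disjoint, but they need not exist here, whereas divergent trees always exist and this is precisely the decomposition that one can mimic in the LP. This theorem is also exactly why the paper's approach does not extend to $k\ge 3$.

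Second, your feasibility argument --- ``correlate the samplings for $T_1$ and $T_2$ so that, with constant probability per terminal, the two lifted paths are edge-disjoint in $G$'' --- is where the real work hides, and the proposal leaves it entirely unspecified. The paper does \emph{not} couple the two trees' samplings. Instead it applies Menger's theorem: $H$ contains two edge-disjoint $r,t$-paths iff for every edge $e$, $H\setminus\{e\}$ still has an $r,t$-path. It then argues, for each fixed pair $(e,t)$, that a fractional flow of value at least $1/2$ from $\hat r$ to the group $\hat S_t$ survives using only ``good'' shallow-tree edges $\hat e$ --- those whose path-sampling distribution places mass at most about $1-\Theta(1/\beta)$ on paths through $e$. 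GKR rounding then picks a good $\hat r,\hat S_t$-path with probability $\Omega(1/D)$, and $O(\beta\log D)$ repeated path samples avoid $e$ on each good $\hat e$ with probability $1-1/D$. This per-$(e,t)$ success probability of $\Omega(1/D)$, boosted over $O(D\log n)$ iterations and union-bounded over all $O(n^2 h)$ pairs $(e,t)$, gives feasibility w.h.p. Your proposal union-bounds only over terminals, and the $1/(5D)$ per-trial probability (not constant) forces the extra $D$ in the iteration count, which your factor accounting misses. Relatedly, the $h^{2/D}$ in the final bound arises from $\beta^2=O(D^2 h^{2/D})$ in the cost analysis (one $\beta$ from the congestion constraint, one from repeating the path sampling $O(\beta\log D)$ times), not from ``applying height reduction to each of the two trees separately.'' Without the good-edge/Menger cut argument as the feasibility engine, the proposal as written does not close.
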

In particular, Theorem~\ref{thm:main-2dst} implies a polynomial-time $O(h^\eps \log n)$ approximation for any constant $\eps>0$, and a quasi-polynomial-time  
$O(\log n\,\log^3 h\log\log h)$ approximation algorithm. We remark that, up to poly-logarithmic factors, this is the best known even for the simpler case of \dst \cite{CCCDGGL99}.

Approximation algorithms for \tdst can be used to approximate with the same asymptotic approximation factor the more general problem, namely {\em \tdss}, described in \cite{CV07,L15-sskcon,N12-sskcon} (see Appendix \ref{sec:fromDSTtoDSS} for more details). 
\begin{definition}\label{def:kdss}
In the $2$-Connected Directed Steiner Subgraph problem (2-DSS),
we are given a directed graph $G=(V,E)$ with edge-costs $\{c_e\}_{e\in E}$ and a set of terminals $S\subseteq V$. The goal is to find a min-cost subgraph $H$ of $G$ such that, for every pair of vertices $s,t\in S$, $H$ has $2$ edge-disjoint paths from $s$ to $t$.
\end{definition}
As a corollary of Theorem~\ref{thm:main-2dst}, we obtain the following result.
\begin{corollary} \label{cor:main-2dss}
For any $D\in[\log_2h]$, there exists a randomized $O(D^3\log D\cdot h^{2/D}\cdot \log n)$ approximation algorithm for 2-DSS that runs in $n^{O(D)}$ time.
\end{corollary}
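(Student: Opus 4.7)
The plan is to derive Corollary~\ref{cor:main-2dss} via a standard two-step reduction from \tdss to \tdst, at the cost of a factor $2$ in the approximation ratio (which is absorbed in the big-$O$). Fix an arbitrary terminal $r^\star\in S$. I will form two \tdst instances sharing the root $r^\star$ and the terminal set $S\setminus\{r^\star\}$: one on $G$ itself, yielding an approximate solution $H_1$, and one on the reverse graph $G^{rev}$ (obtained by flipping every edge while keeping costs), yielding a solution whose edge-reversal I call $H_2\subseteq G$. Apply Theorem~\ref{thm:main-2dst} with the same parameter $D$ to each instance, and output $H:=H_1\cup H_2$. The running time is $n^{O(D)}$ as required.

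For \emph{feasibility}, note that by construction $H$ contains $2$ edge-disjoint paths from $r^\star$ to every $t\in S$ (coming from $H_1$) and $2$ edge-disjoint paths from every $s\in S$ to $r^\star$ (coming from $H_2$, since reversing back turns disjoint $r^\star\to s$ paths in $G^{rev}$ into disjoint $s\to r^\star$ paths in $G$). Fix any ordered pair $s,t\in S$ and any single edge $e\in H$. At least one of the two $s\to r^\star$ paths avoids $e$, and at least one of the two $r^\star\to t$ paths avoids $e$; their concatenation is an $s\to t$ walk in $H-e$. Hence no single edge separates $s$ from $t$ in $H$, so the minimum $s\to t$ edge-cut has size at least $2$, and by Menger's theorem $H$ contains $2$ edge-disjoint $s\to t$ paths, as required by \tdss.

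For the \emph{cost analysis}, let $H^\star$ be an optimal \tdss solution of cost $\mathrm{OPT}$. Since $H^\star$ has $2$ edge-disjoint paths between every ordered pair of terminals, in particular it has $2$ edge-disjoint $r^\star\to t$ paths for every $t\in S-\{r^\star\}$, and therefore is feasible for the first \tdst instance; symmetrically, its edge-reversal is feasible for the second instance on $G^{rev}$. So both \tdst instances have optimum at most $\mathrm{OPT}$, and Theorem~\ref{thm:main-2dst} gives $\cost(H_i)\le O(D^3\log D\cdot h^{2/D}\cdot \log n)\cdot\mathrm{OPT}$ for $i\in\{1,2\}$. The output $H=H_1\cup H_2$ has cost at most $\cost(H_1)+\cost(H_2)$, establishing the claimed approximation guarantee.

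The only non-routine point is the Menger step above, and it is immediate from the observation that both $H_1$ and $H_2$ individually provide \emph{two} edge-disjoint paths to/from $r^\star$, which is exactly what is needed to survive the deletion of an arbitrary single edge in the union. The number of terminals in each \tdst sub-instance is $h-1$ and the vertex count is unchanged, so the parameters $h$ and $n$ appearing in the approximation factor match those of the original \tdss instance without any blow-up.
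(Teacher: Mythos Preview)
Your proposal is correct and follows essentially the same approach as the paper: pick an arbitrary terminal as root, solve the out-rooted and in-rooted \tdst instances (the latter via the reverse graph, exactly as you do), and take the union, losing only a factor of~$2$. The paper phrases feasibility as ``transitivity of edge-connectivity'' whereas you spell out the Menger/single-edge-deletion argument explicitly, but these are the same standard observation.
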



Our approach is rather sophisticated, and involves several logical
steps.
The starting point is the following decomposition theorem\footnote{There also exists a vertex-connectivity analogue of this theorem, but we omit it here since it is not necessary for our goals.}.
\begin{theorem}[Divergent Steiner Trees Theorem \cite{GeorgiadisT16,Kovacs07}]
\label{thm:independent-tree}
Let $H$ be a feasible solution to a \tdst instance with a root $r$ and terminals $S$. Then $H$ can be decomposed into two (possibly overlapping) arborescences (\emph{divergent Steiner trees}) $T_1$ and $T_2$ rooted at $r$ and spanning 
$S$ such that, for every terminal $t\in S$, the unique $r$-$t$ paths $P_1$ in $T_1$ and $P_2$ in $T_2$ are edge disjoint.
\end{theorem}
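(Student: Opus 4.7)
The plan is to deduce the statement from Edmonds' classical branching theorem, which guarantees $k$ edge-disjoint spanning $r$-arborescences in a rooted digraph $G$ whenever $\lambda_G(r,v)\geq k$ for every $v\in V(G)$. Our hypothesis only supplies $\lambda_H(r,t)\geq 2$ for terminals $t\in S$, not for every Steiner vertex of $H$, so Edmonds cannot be applied to $H$ as is; the main reduction step is to restrict attention to a subgraph whose vertices all satisfy the 2-edge-connectivity condition from $r$.

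Concretely, I would build an auxiliary subgraph $H^{*}\subseteq H$ with $S\cup\{r\}\subseteq V(H^{*})$ and $\lambda_{H^{*}}(r,v)\geq 2$ for every $v\in V(H^{*})$, by iteratively handling Steiner vertices of low connectivity. As long as some $v\notin S\cup\{r\}$ has $\lambda_{H^{*}}(r,v)<2$, pick a witnessing one-edge cut $(A,\overline{A})$ with $r\in A$ and $v\in\overline{A}$. Because no pair of edge-disjoint $r$-$t$ paths can both cross a single cut edge, every terminal must lie on the root side $A$, and one can ``absorb'' the appendage $\overline{A}$ into $A$ by contracting the cut edge and replacing each terminal-bound excursion through $\overline{A}$ by an equivalent short-cut edge of the same cost. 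This operation strictly reduces the number of offending vertices while preserving $\lambda(r,t)\geq 2$ for every $t\in S$.

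Applying Edmonds' theorem to $H^{*}$ yields two edge-disjoint spanning $r$-arborescences $T_1^{*}$ and $T_2^{*}$. Un-contracting them produces arborescences $T_1,T_2\subseteq H$ rooted at $r$ and spanning $S\cup\{r\}$. The lifted arborescences may share edges coming from the contracted portions, which accounts for the ``possibly overlapping'' wording, but the edge-disjointness of $T_1^{*}$ and $T_2^{*}$ in $H^{*}$ lifts to edge-disjointness of the unique $r$-to-$t$ paths in $T_1$ and $T_2$ for every $t\in S$, which is exactly the divergence property claimed in the statement.

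The main obstacle I anticipate is proving that the contraction step genuinely preserves 2-edge-disjoint $r$-$t$ connectivity, since a terminal's two paths may honestly excursion through $\overline{A}$ and naive deletion of the appendage would break one of them. A clean treatment requires either the splitting-off technique of Lov\'asz or the dominator-tree machinery employed in \cite{GeorgiadisT16,Kovacs07}; both circumvent the explicit bookkeeping of excursions and directly produce the pair of divergent arborescences in any rooted digraph with the required pointwise edge-connectivity from the root.
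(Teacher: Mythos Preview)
The paper does not prove Theorem~\ref{thm:independent-tree}; it is quoted from \cite{GeorgiadisT16,Kovacs07} and used as a black box throughout. So there is no in-paper argument to compare your proposal against, and your sketch should be read as an attempt to reprove an external result.

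On its own merits, the proposal has a real gap at precisely the point you flag. The step ``contract the cut edge and replace each terminal-bound excursion through $\overline{A}$ by an equivalent short-cut edge of the same cost'' is not a legal operation here: the arborescences $T_1,T_2$ must be subgraphs of $H$, and you are not allowed to manufacture shortcut edges. Nor does a one-edge $r$-cut $(A,\overline{A})$ cleanly isolate $\overline{A}$: an $r$--$t$ path for $t\in A$ may enter $\overline{A}$ across the unique $A\to\overline{A}$ edge and return via any of possibly many $\overline{A}\to A$ edges, so ``absorbing'' $\overline{A}$ while keeping both $r$--$t$ paths inside $H$ and edge-disjoint is essentially the content of the theorem rather than a preprocessing step. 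In short, your reduction to Edmonds' branching theorem presupposes a cleanup that is of the same difficulty as the statement itself, and the body of the proposal does not carry it out. Your last paragraph concedes this and points to Lov\'asz/Mader splitting-off or dominator trees as the actual tools; that is accurate, but invoking them is the whole proof, not a patch.

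For reference, the cited proofs do not go through Edmonds at all. Georgiadis and Tarjan construct the two arborescences from the dominator tree of $H$ with respect to $r$ together with a low/high spanning-tree pair, and Kov\'acs gives a direct constructive argument; in both cases the two trees are produced simultaneously and may share edges by design, which is why the theorem says ``possibly overlapping'' rather than asserting edge-disjoint arborescences as Edmonds would yield.
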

Intuitively, $T_1$ and $T_2$ are two solutions to the \dst problem on
the same instance with the extra property of being edge disjoint
``from the point of view'' of a single terminal. We remark that this
is the only part of our approach that does not directly generalize to
connectivity $k\geq 3$ because the decomposition does not exist for 
$k\geq 3$ \cite{Huck95,BK11}.
(See the discussion in Section~\ref{sec:conclusions}.)

The second main tool from the literature that we wish to exploit is the Zelikovsky's height-reduction theorem \cite{HRZ01,Z97} that is used in approximating \dst.
\begin{theorem}[Height Reduction Theorem \cite{HRZ01,Z97}]
\label{thm:height-reduction-second}
Consider an edge weighted arborescence $T$ rooted at $r$ and spanning $S$.  
Then, for any $D\in[\log_2|S|]$, in the metric completion of $T$, there exists an arborescence $T^D$ of depth at most $D$ rooted at $r$ and spanning $S$
together with a mapping $\psi$ that maps each vertex of $T^D$ to
a vertex of $T$ and a mapping $\phi$ that maps each edge 
$\he=\hu\hv\in E(T^{D})$ to a $\psi(\hu),\psi(\hv)$ path $\phi(\he)$ in $T$ so that  the following \emph{bounded congestion} property holds:
\[
|\he\in E(T^D): e\in \phi(\he))\}| \leq \beta'=O(D\cdot |S|^{1/D})\quad \forall e\in E(T).
\]
In particular, the cost of $T^D$ is at most $\beta'$ times the cost of $T$.
\end{theorem}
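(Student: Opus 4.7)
The plan is to prove the statement by induction on $D$. For the base case $D = 1$, let $T^1$ be the star consisting of $r$ connected to every terminal $t \in S$ by a single metric edge; define $\psi$ to be the identity on $\{r\} \cup S$ and $\phi(rt)$ to be the unique $r$-to-$t$ path in $T$. Each $e \in E(T)$ is traversed by $\phi(rt)$ exactly for those $t$ in the subtree hanging below $e$, so the congestion is at most $|S| = O(1 \cdot |S|^{1/1})$, establishing the base case.

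For the inductive step with $D \ge 2$, set $\sigma := \lceil |S|^{1/D} \rceil$. The heart of the argument is a decomposition of $T$ into vertex-disjoint rooted sub-arborescences $T_1, \ldots, T_m$, each rooted at a vertex $v_i$, with terminal set $S_i \subseteq S$ satisfying $|S_i| = O(\sigma^{D-1})$, where $m = O(\sigma)$ and the $T_i$'s together span $S$. I would obtain such a decomposition via a greedy walk-and-cut: repeatedly start at $r$ and walk down $T$ (following the child whose subtree contains the most remaining terminals) until reaching a vertex $v$ none of whose children has $\ge \sigma^{D-1}$ remaining terminals in its subtree; then aggregate a minimal collection of $v$'s children so that the captured terminals number between $\sigma^{D-1}$ and $2\sigma^{D-1}$; excise this subtree from further consideration and repeat. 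Once the remaining pool falls below $\sigma^{D-1}$ terminals, they can be folded into the final group, preserving $|S_i| = O(\sigma^{D-1})$. The number of iterations is $m \le |S|/\sigma^{D-1} = O(\sigma)$.

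Now I construct $T^D$ by taking a metric edge $(r, v_i)$ for each representative and hanging below it the depth-$(D-1)$ tree $T_i^{D-1}$ obtained by invoking the inductive hypothesis on $T_i$ at target depth $D-1$. Extending $\psi$ and $\phi$ in the obvious way, the resulting $T^D$ has depth at most $1 + (D-1) = D$. For the congestion bound, fix $e \in E(T)$: direct metric edges $(r, v_i)$ whose $\phi$-image traverses $e$ correspond to representatives $v_i$ in the subtree of $T$ below $e$, contributing at most $m = O(\sigma)$; recursive edges through $e$ arise from at most one sub-problem $T_i$ (by vertex-disjointness), contributing $O((D-1) \cdot |S_i|^{1/(D-1)}) = O((D-1)\sigma)$ by the inductive hypothesis. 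Summing gives $O(D\sigma) = O(D \cdot |S|^{1/D})$. The cost bound $\cost(T^D) \le \beta' \cdot \cost(T)$ then follows from the triangle inequality combined with the congestion bound, since each metric edge $\he$ has cost at most $\cost(\phi(\he))$.

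The main obstacle is the decomposition step: simultaneously enforcing $m = O(\sigma)$, $|S_i| = O(\sigma^{D-1})$, vertex-disjointness of the $T_i$'s, and full coverage of $S$ requires care. The greedy walk-and-cut is the standard tool, but one must track how the factor-of-two aggregation overshoot and any leftover terminals propagate through the recurrence $\beta'(|S|, D) \le O(\sigma) + \beta'(O(\sigma^{D-1}), D-1)$; a careful threshold choice and absorption of multiplicative constants into the $O(\cdot)$ ensures the recurrence closes at the claimed $O(D \cdot |S|^{1/D})$ rather than blowing up across the $D$ levels of recursion.
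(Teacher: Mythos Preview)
The paper does not give its own proof of this theorem; it is quoted as a known result from \cite{HRZ01,Z97} (with the bounded-congestion formulation extracted from the original cost-based statement). Your inductive scheme---decompose $T$ into $O(\sigma)$ subtrees each containing $O(\sigma^{D-1})$ terminals, recurse to depth $D-1$, and hang the results under metric edges from $r$---is precisely the classical argument from those references, so the approach is correct and standard.

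Two technical points are worth tightening. First, the subtrees produced by walk-and-cut are \emph{edge}-disjoint but not vertex-disjoint: several $T_i$ may share the same root $v$ (when the walk returns to $v$ and peels off a different batch of children). This does not harm the congestion analysis, since all you need is that each edge of $T$ lies in at most one $T_i$, but your write-up should say edge-disjoint. Second, and more substantively, if you literally invoke the inductive hypothesis with a freshly computed $\sigma_i=\lceil |S_i|^{1/(D-1)}\rceil$, the factor-of-two overshoot compounds across levels as $2^{1/(D-1)+1/(D-2)+\cdots+1}=2^{H_{D-1}}=\Theta(D)$, yielding $O(D^2|S|^{1/D})$ rather than $O(D|S|^{1/D})$. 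The standard fix---which your final paragraph gestures at---is to fix $\sigma=\lceil |S|^{1/D}\rceil$ once globally and, at recursion level $d$, aim for pieces of size at most $c\,\sigma^{D-d}$; then each level contributes $O(\sigma)$ to the congestion and the recurrence closes at $O(D\sigma)$. You should state this explicitly rather than leave it as ``careful threshold choice.''
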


We remark that the Height Reduction Theorem was originally stated in
terms of cost (\bun{which is implied by our version}). Here we extract the bounded congestion property that is implicit in the proof.

%
The main difficulty that we have to face is how to apply these
two tools. In DST approximation, Theorem~\ref{thm:height-reduction-second}
is typically applied by considering
the metric closure of the input graph. This is not applicable to
our case since the metric closure might destroy the connectivity
properties of the input graph.
Moreover, we cannot directly apply the theorem to the divergent
Steiner trees because they are a decomposition of 
{\em an optimal solution} that we wish to compute.

We solve these issues by defining an ILP that mimics the decomposition
of the optimal solution into divergent Steiner trees $T_1$ and $T_2$
(as in Theorem~\ref{thm:independent-tree}) and the following application of 
Theorem~\ref{thm:height-reduction-second} to these trees to obtain
$D$-shallow trees $T_1^D$ and $T_2^D$.
In more detail, we define a $D$-shallow tree that incorporates (twice)
all the possible paths of length $D$ starting from the root
(analogously to \cite{L16-kdst}).
This shallow tree implicitly includes $T_1^D$ and $T_2^D$.
We encode the mapping of each edge of $T_1^D\cup T_2^D$ 
into the associated paths in $T_1\cup T_2$ using flow constraints.
We also add constraints that encode the \emph{bounded congestion} property from Theorem~\ref{thm:height-reduction-second} (crucial to bound the cost of the approximate solution) and the \emph{divergency} property from Theorem \ref{thm:independent-tree} (crucial to achieve a feasible solution).

Rounding a fractional solution to the linear relaxation is a non-trivial task. 
We observe that each terminal $t\in S$ is associated with a subset of
vertices $\hS_t$ in the shallow tree, and the edges of $T_1^D\cup
T_2^D$ must contain two edge disjoint paths from the root $\hr$ to
$\hS_t$.
In other words, the latter edges induce a feasible solution to a tree
instance of $2$-Edge Connected Group Steiner Tree (\tgst) with root $\hr$
and groups $\{\hS_t\}_{t\in S}$ (more details in related work). 
This allows us to add the standard LP constraints for \tgst on 
{\em a tree} to our linear relaxation, and use the GKR rounding algorithm 
by Garg~et~al. \cite{GKR00} to round the corresponding variables to an
integral \tgst solution in the shallow tree.

The last obstacle that we need to face is that we need to map back each chosen edge $\he$ of the shallow tree to a path $\phi(\he)$ of the original graph. The LP solution provides a \emph{fractional mapping} in the form of a flow. We interpret this flow as a distribution over paths and sample one path $\phi(\he)$ according to this distribution. In order to show that the solution is feasible (with large enough probability), we exploit an argument similar in spirit to the one used by Chalermsook et al. \cite{CGL15} in the framework of $k$-Edge Connected Group Steiner  Tree (\kgst) approximation. However, our probabilistic mapping makes the analysis slightly more involving. Shortly, we argue that for any given edge $e$ of the original graph, GKR rounding has sufficiently large probability to select paths using only edges $\he$ of the shallow tree whose associated probabilistic mapping has low chance to use the edge $e$. The claim then follows by a cut argument as in \cite{CGL15}.

\subsection{Related Work}

In the Directed Steiner Tree problem (\dst), we are given an $n$-vertex directed edge weighted graph, a root $r$ and a collection of $h$ terminal vertices $S$. The goal is to find a min-cost arborescence rooted at $r$ and spanning $S$. \dst is one of the most fundamental network design problems in directed graphs. 
\dst admits, for any positive integer $D$, an $O(D h^{1/D}\log^2h)$ approximation running in time $n^{O(D)}$ ~\cite{CCCDGGL99,Z97}. In particular, this implies a polynomial-time $O(h^\eps)$ approximation for any constant $\eps>0$, and an $O(\log^3 h)$ approximation running in quasi-polynomial time. 

\kdst\ and \kdss are the natural generalization of \tdst and \tdss,
respectively, with connectivity $k$. 
These problems have been a subject of study since early 90's \cite{Dahl93}
and have been subsequently studied in \cite{CLNV14,L14,L16-kdst}.
Cheriyan~et~al. \cite{CLNV14} showed that \kdst is at least as hard as 
the {\em Directed Steiner Forest} problem
and the {\em Label-Cover} problem.
Thus, \kdst admits no $2^{\log^{1-\eps}n}$-approximation
algorithm, for any $\eps>0$,
unless $\mathrm{NP}\subseteq\mathrm{DTIME}(2^{\polylog(n)})$.
For small $k$, they showed that \kdst admits no
$k^{\sigma}$-approximation algorithm for some fixed $\sigma>0$
unless $\mathrm{P}=\mathrm{NP}$.
If $k$ is large enough, then \kdst is NP-hard even when
we have only two terminals,
and they further proved that \kdst when $h$ and $k$ are constants is 
polynomial-time solvable in directed acyclic graphs.
However, if the input graph contains a cycle, 
the complexity status of \kdst is not clear even for $k=h=2$.
Laekhanukit refined the hardness result of \kdst in
\cite{L14}
and showed that \kdst admits no
$k^{1/2-\eps}$-approximation algorithm, for any
constant $\eps>0$, unless $\mathrm{NP}=\mathrm{ZPP}$.
In a subsequent work, Laekhanukit \cite{L16-kdst} presented an LP-based $O(k^{D-1}D\log n)$-approximation
algorithm for {\em $D$-shallow} instances of \kdst and \kdss running in time $n^{O(D)}$. It seems that his approach cannot be generalized to arbitrary instances (although we will exploit part of his ideas).

A well-studied special case of \dst is the \emph{Group Steiner Tree} problem (\gst). Here we are given an undirected weighted graph, a root vertex $r$, and a collection of $h$ groups $S_i\subseteq V$. The goal is to compute the cheapest tree that spans $r$ and at least one vertex from each group $S_i$. The best-known polynomial-time approximation factor for \gst\ is
$O(\log^2h\log n)$ due to Garg~et~al.~\cite{GKR00}. Their algorithm uses {\em  probabilistic distance-based tree embeddings}
\cite{B96,FRT04} as a subroutine.
Chekuri and Pal \cite{CP05} presented an $O(\log^2 h)$ approximation
that runs in quasi-polynomial time.
On the negative side,
Halperin and Krauthgamer \cite{HK03}
 showed that \gst\ admits 
no $\log^{2-\eps}n$-approximation algorithm, for
any constant $\eps>0$, unless 
$\mathrm{NP}\subseteq\mathrm{ZPTIME}(2^{\polylog(n)})$. This implies the same hardness for \dst, hence for \tdst and \tdss.

The high-connectivity version of \gst, namely, 
the {\em $k$-Edge Connected Group Steiner Tree problem} (\kgst),
was studied in \cite{CGL15,GKR10,KKN12}. Here the goal is to find a min-cost subgraph that contains $k$ edge-disjoint paths between the root and each group. For $k=2$, 
the best approximation ratio is $\tilde{O}(\log^3n\log h)$
due to the work of Gupta~et~al.~\cite{GKR10}.
If the size of any group is bounded by $\alpha$, then
there is an $O(\alpha \log^2n)$-approximation algorithm by
Khandekar~et~al.~\cite{KKN12}.
For $k\geq 3$, there is no known non-trivial approximation algorithm
for \kgst. Chalermsook~et~al. \cite{CGL15} presented
an LP-rounding bicriteria approximation algorithm that returns a
subgraph with cost $O(\log^2n\log h)$ times the optimum 
while guarantees a connectivity of at least $\Omega(k/\log n)$. Their algorithm uses the {\em probabilistic cut-based tree embeddings} by
R{\"{a}}cke~\cite{R08} as a subroutine (as opposed to distance-based ones in \cite{GKR10}). We will exploit part of their ideas in our rounding algorithm 
(although \bun{a probabilistic tree embedding for directed graphs
is not available for us}).
Chalermsook~et~al. also showed that \kgst\ is hard to approximate to
within a factor of $k^{\sigma}$, for some fixed constant $\sigma>0$,
and if $k$ is large enough, then the problem is at least as hard as
the Label-Cover problem, meaning that \kgst\ admits no
$2^{\log^{1-\eps}n}$-approximation algorithm, for any constant $\eps>0$,
unless $\mathrm{NP}\subseteq\mathrm{DTIME}(2^{\polylog(n)})$.

As already mentioned, survivable network design is well studied in undirected (weighted) graphs. First, consider the edge connectivity version. The earliest work is initiated in early 80's by Frederickson and J{\'{a}}J{\'{a}}~\cite{FJ81}, where the authors studied the {\em 2-Edge Connected Subgraph} problem in both directed and undirected graphs.
In the most general form of the problem, also known as the \emph{Steiner Network} problem, we are given non-negative integer requirements $k_{u,v}$ for all pairs of vertices $u,v$, and the goal is to 
find a min-cost subgraph that has $k_{u,v}$ edge-disjoint paths between $u$ and $v$. Jain \cite{J01} devised a $2$-approximation algorithm for this problem. We remark that $2$ is the best known approximation factor even for $k_{u,v}\in \{0,1\}$ \cite{AKR95}, which is known as the \emph{Steiner forest} problem. The classical \emph{Steiner tree} problem is a special case of Steiner forest where all pairs share one vertex. Here the best known approximation factor is $1.39$ due to the work of
Byrka~et~al.~\cite{BGRS13}.

Concerning vertex connectivity, two of the most well-studied problems are the {\em $k$-Vertex Connected Steiner Tree} ($k$-ST) and {\em $k$-Vertex Connected Steiner Subgraph} ($k$-SS) problems, i.e., the undirected versions of \kdst\ and \kdss, respectively.  There are $2$-approximation algorithms for $2$-ST and $2$-SS by Fleischer~et~al. \cite{FJW06} using the iterative rounding method.
For $k\geq 3$, Nutov devised an $O(k\log k)$-approximation algorithm
for $k$-ST in \cite{N12-rooted} and 
an $O(\min\{|S|^2, k\log^2 k\})$-approximation algorithm for $k$-SS in
\cite{N12-sskcon} (also, see \cite{L15-sskcon}).
A special case of $k$-SS with metric-costs is
studied by Cheriyan and Vetta in \cite{CV07} who
gave an $O(1)$-approximation algorithm for the problem.
The most extensively studied special case of $k$-SS is when all
vertices are terminals, namely the {\em $k$-Vertex Connected Spanning Subgraph} problem, which has been studied, e.g., in \cite{CVV03,KN05,FL12,N14-kvcss,CV14}.
The current best approximation guarantees are
$O(\log(n/(n-k))\log{k})$ \cite{N14-kvcss}, and
$6$ for the case $n \leq 2k^3$ \cite{CV14,FNR15}. More references can be found in \cite{KN07,N16-survey-Steiner,N16-survey-Spanning}.

\paragraph{Notation.}
We use standard graph terminologies. For any graph $G$, we denote vertex and edge sets of $G$ by 
$V(G)$ and $E(G)$, respectively. For any subset of vertices $S\subseteq V(G)$ (or a single vertex $S=v$), 
we denote the set of edges of $G$ entering $S$ by $\delta^{in}_G(S)$
and denote the set of edges leaving  $S$ by $\delta^{out}_G(S)$.

\section{Embedding into a Shallow Tree}
\label{sec:shallow}

Our LP-relaxation is defined based on the existence an embedding of an
optimal \tdst solution $H$ in the original graph into an auxiliary
$D$-shallow tree $\hat{H}$ (i.e., a tree of depth at most $D$), where
$D>0$ is an integer given as parameter. 
Our embedding is obtained by applying 
the \emph{Height Reduction} to \emph{Divergent Steiner Trees}.

We start by decomposing $\hat{H}$ into two divergent Steiner trees
$T_1$ and $T_2$ using the Divergent Steiner Tree Theorem
(Theorem~\ref{thm:independent-tree}). 
Then we apply the Height Reduction Theorem
(Theorem~\ref{thm:height-reduction-second}) to each such $T_i$, hence
getting a $D$-shallow tree $T^D_i$ in the metric closure of $T_i$
together with mappings $\psi_i$ and $\phi_i$.
The final step is to unify the roots of $T^D_1$ and $T^D_2$, hence
getting a tree $\hat{H}$ rooted at $\hat{r}$. 
We also merge the two mappings in a natural way, thus 
getting $\psi:V(\hH)\rightarrow V(H)$ and $\phi:E(\hH)\rightarrow
2^{E(H)}$. 
Let $\psi^{-1}(v)$ be the set of vertices $\hat{v}\in V(\hat{H})$ with
$\psi(\hat{v})=v$. 
Note also that each simple $\hu,\hv$-path $\hat{P}$ in $\hH$ defines a
$\psi(\hu),\psi(\hv)$ path $P=\phi(\hat{P})$ in $H$.

By construction, it is not hard to see that $(\hH,\psi,\phi)$ has the
following properties: 
\begin{enumerate}\itemsep0pt
\item (\emph{divergency}) for any terminal $t\in S$, there exist two vertices $\hat{t}_1,\hat{t}_2\in \psi^{-1}(t)$ such that the following holds. Let $\hat{P}_i$ be the $\hat{r}$-$\hat{t}_i$ path in $\hat{H}$ for $i=1,2$. Then $\phi(\hat{P}_1)$ and $\phi(\hat{P}_2)$ are two edge-disjoint $r$-$t$ paths in $H$ (and consequently also in $\hat{H}$). 
\item (\emph{bounded congestion}) For any edge $e\in E(H)$, $|\hat{e}\in E(\hat{H}): e\in \phi(\hat{e})|\leq \beta:=2\beta'=O(D|S|^{1/D})$.
\end{enumerate}

Note that we do not know an optimal solution, and consequently the two trees $T_1,T_2$ that are needed to define the above embedding. In the next section, we define an LP relaxation that, in some fractional sense, achieves this goal. 

\section{An LP-relaxation for \tdst}
\label{sec:LP}

In this section, we present an ILP formulation of \tdst, and the corresponding LP relaxation.

The first step in the definition of our ILP is to build a proper
$D$-shallow tree $\hat{T}=(\hV,\hE)$ that contains the tree $\hat{H}$
(defined in the previous section) as a subgraph.
To this end, we list {\em twice} all the possible sequences of at most $D+1$
distinct vertices of $G$ starting with the root $r$. 
The {\em prefix tree} of these sequences (rooted at $\hat{r}=r$)
is our tree $\hat{T}$. That is, each vertex $\hat{v}$ of $\hat{T}$ is associated with a
vertex $v$ in the input graph $G$, 
and each rooted-path in $\hat{T}$ corresponds to each sequence we
listed.
It is not hard to see that $\hat{H}$ can be mapped to a subtree of $\hat{T}$. 
Let $\psi:\hV\rightarrow V$ be the corresponding mapping of vertices. With the same notation as before, we define 
$\hat{S}_t:=\psi^{-1}(t)$ to be the set of vertices in $\hT$ corresponding to terminal $t\in S$ (the \emph{group} of $t$). The notion of group will be needed later to define a proper \tgst instance.


We have all the ingredients for formulating our ILP.
We define indicator variables $x_e\in \{0,1\}$ for all $e\in E$, which take value $x_e=1$ iff $e\in H$ ($H$ is an optimal solution to \tdst). 
The objective function that we wish to minimize is $\sum_{e\in E}c_ex_e$.
Similarly, we define indicator variables $\hat{x}_e\in \{0,1\}$ for all $e\in \hE$, which take value $\hat{x}_e=1$ iff $e\in E(\hH)$.

Now we define our constraints.
First we define a set of linear constraints, denoted by $LP_{gst}$, which models the fact that, for each $t\in S$, $\hH$ must contain two edge disjoint paths from $\hat{r}$ to the group $\hS_t$. So, we introduce flow variables $\hf^t_{\he}\in \{0,1\}$ for all $\he\in \hE$ and all terminals $t\in S$. The constraints $LP_{gst}$ are given in Figure~\ref{fig:LP2GST}.
\begin{Figure}
\everymath{\displaystyle}
\[
\begin{array}{rclll}
     \hf^t_{\he}  &\leq &  \hx_{\he}
      &\quad& \forall \he\in \hE,\forall t\in S\\
     \sum_{\he\in\delta^{in}_{\hT}(\hv)}\hf^t_{\he} & = &  
      \sum_{\he\in\delta^{out}_{\hT}(\hv)}\hf^t_{\he} 
      &\quad& 
      \forall t\in S,\forall \hat{v}\in \hV\setminus (\hat{S}_t\cup \{\hat{r}\})\\
     \sum_{\he\in\delta^{in}_{\hT}(\hat{S}_t)}\hf^t_{\he} & \geq &  2
      &\quad& \forall t\in S
\end{array}
\]
\everymath{\textstyle}
\caption{The $LP_{gst}$ constraints.} 
\label{fig:LP2GST}
\end{Figure}
We remark that $LP_{gst}$ are the linear constraints of the standard LP relaxation for the \tgst problem with the root $\hat{r}$ and groups $\hat{S}_t$ for $t\in S$ in which the underlying graph is a {\em tree}. This is a crucial part of our formulation because this LP has a large integrality gap on general graphs \cite{ZosinK02}.

Next we define the set of constraints $LP_{cong}$ that formulates
(implicitly) a mapping $\phi:\hE \rightarrow 2^{E}$ of edges
$\he=\hu\hv$ of $\hT$ into $\psi(\hu),\psi(\hv)$ paths of $G$. We
introduce the following new flow variables: $f_{\he,e}\in \{0,1\}$,
for all $\he\in \hE$ and $e\in E$. Intuitively, the set of edges $e\in E$
with $f_{\he,e}=1$ form the path $\phi(\he)$. 
Clearly one has $f_{\he,e}\leq x_e$.
In order to satisfy the bounded congestion property, 
we impose that, for a given $e\in E$, the sum of variables $f_{\he,e}$ 
is upper bounded by $\beta\cdot x_e$, where $\beta=O(D|S|^{1/D})$ 
comes from the Height Reduction Theorem 
(Theorem~\ref{thm:height-reduction-second}). These LP constraints are given in Figure~\ref{fig:LP-cong}.
\begin{Figure}
\everymath{\displaystyle}
\begin{center}
\begin{tabular}{rllll}
$f_{\he,e}$
  & $\leq$ 
  & $x_e$
  & $\forall \he=\hu\hv \in \hE,\forall e \in E$\\ 
$\sum_{e\in\delta^{out}_G(u),u=\psi(\hu)}f_{\he,e}$
  & $=$ 
  & $\hx_{\he}$
  & $\forall \he=\hu\hv \in \hE$ \\
$\sum_{e\in\delta^{in}_G(u),u=\psi(\hu)}f_{\he,e}$
  & $=$ 
  & $0$ 
  &$\forall \he=\hu\hv \in \hE$\\ 
$\sum_{e\in\delta^{in}_{G}(w)}f_{\he,e}$  
  & $=$ 
  & $\sum_{e\in\delta^{out}_G(w)}f_{\he,e}$ 
  & $\forall \he=\hu\hv \in \hE, \forall w \in V\setminus \{\psi(\hu),\psi(\hv)\}$ \\
$\sum_{\he\in \hE}f_{\he,e}$  
  & $\leq$ 
  & $\beta \cdot x_e$ 
  & $\forall e\in E$
\end{tabular}
\end{center}
\caption{The constraints $LP_{cong}$.} 
\label{fig:LP-cong}
\end{Figure}

It remains to enforce the divergency property.
We introduce a final set of new variables: 
$f^{t}_{\he,e}\in \{0,1\}$, for all $\he\in \hE$, $e\in E$, and $t\in
S$. Intuitively, the edges $e\in E$ with $f^t_{\he,e}=1$ indicate
whether $e$ is part of one of the two edge disjoint paths in $H$ from
$r$ to $t$. In an integral solution, for a given $e\in E(H)$ and $t$,
at most one $f^t_{\he,e}$ can be set to $1$. This guarantees that the
mapping $\phi$ maps two $\hat{r}$-$\hS_t$ edge-disjoint paths in the
shallow tree into two edge disjoint paths in the original graph from
$r$ to $t$. The set of constraints $LP_{div}$ is described in
Figure~\ref{fig:LP_div}. 

\begin{Figure}
\everymath{\displaystyle}
\begin{center}
\begin{tabular}{rllll}
$f^t_{\he,e}$               
  & $\leq$ & $f_{\he,e}$
  & $\forall e\in E, \forall \he\in \hE, \forall t\in S$\\
$\sum_{e\in\delta^{out}_G(u),u=\psi(\hu)}f^t_{\he,e}$
  & $=$ & $\hf^t_{\he}$
  & $\forall \he=\hu\hv \in \hE,\forall t\in S$\\
$\sum_{e\in\delta^{in}_G(u),u=\psi(\hu)}f^t_{\he,e}$
  & $=$ & $0$
  & $\forall \he=\hu\hv \in \hE,\forall t\in S$\\
$\sum_{e\in\delta^{in}_G(w)}f^t_{\he,e}$
  & $=$ & $\sum_{e\in\delta^{out}_G(w)}f^t_{\he,e}$
  & $\forall \he=\hu\hv \in \hE,\forall t\in S,\forall w \in V\setminus \{\psi(\hu),\psi(\hv)\}$\\
$\sum_{\he\in \hE}f^t_{\he,e}$               
  & $\leq$ & $x_e$
  & $\forall e\in E, \forall t\in S$
\end{tabular}
\end{center}
\caption{The constraints $LP_{div}$}
\label{fig:LP_div}
\end{Figure}

By relaxing the integrality constraints on the variables, we obtain an
LP relaxation LP-\tdst for \tdst, presented in Figure~\ref{fig:globalLP}. 
\begin{Figure}
\everymath{\displaystyle}
\begin{tabular}{llll}
  min & $\sum_{e\in E}c_e\,x_e$\\
  s.t. & $LP_{gst}$ & \\
        & $LP_{cong}$ & \\ 
        & $LP_{div}$ & \\ 
& $0 \leq x_e, \hx_{\he}, f_{\he,e}, \hf^t_{\he},
    f^t_{\he,e} \leq 1$
& $\forall \he\in \hE, \forall e\in E, \forall t\in S$    
\end{tabular}
\everymath{\textstyle}
\caption{LP relaxation LP-\tdst.} 
\label{fig:globalLP}
\end{Figure}

\section{Approximation Algorithm: Rounding via Tree Embedding}
\label{sec:algo}

In this section, we present our approximation algorithm for \tdst. Our algorithm starts by solving LP-\tdst.
Denote by $\{x_e, \hx_{\he}, \hf^t_{\he},f_{\he,e},f^t_{\he,e}\}_{e\in E,\he\in \hE,t\in S}$ an optimal fractional solution.
We then execute for $O(D \log n)$ times a \emph{rounding procedure}, consisting of two main steps: the \emph{GST rounding} and the \emph{path mapping}. The union of all the solutions obtained is the approximate solution, which is feasible w.h.p.
   
In more detail, consider a given iteration $j$. The variables $\{\hx_{\he}\}_{\he\in \hE}$ provide a feasible solution to the standard LP for \tgst on trees. In the GST rounding step, we apply the rounding algorithm by Garg~et~al.~\cite{GKR00}, which we refer to as GKR rounding, to round these variables. This gives us a subtree $\hat{H}_j=(\hV_j,\hE_j)$ of $\hT$. 

In the path mapping step, we consider each edge $\he=\hu\hv\in \hE_j$, where $u=\psi(\hu)$ and $v=\psi(\hv)$. We randomly map $\he$ to a $u,v$-path in $G$. To this aim, we interpret variables $\{\hf_{\he,e}\}_{e\in E}$ as a distribution $\distP_{\he}$ over $u,v$-paths, and we sample according to this distribution. We repeat this sampling $O(\beta\log D)$ many times to guarantee that we have the desired properties 
(which will be discussed later) with sufficiently large probability.

%

Our main algorithm is presented in Algorithm~\ref{algo:main-algo}.

\begin{algorithm}
\caption{Round \tdst}
\begin{algorithmic}[1]
\label{algo:main-algo}
\STATE Solve LP-\tdst and obtain a fractional solution 
       $\{x_e,\hx_{\he},\hf^t_{\he},
         f_{\he,e},f^t_{\he,e}\}_{e\in E,\he\in \hE,t\in S}$.
\FOR{$j=1$ \TO $20 D\ln n$} \label{algo:tdst:cut-avoiding}
   \STATE Round variables $\hat{x}_{\he}$ using GKR Rounding, and obtain a subtree $\hH_j=(\hV_j,\hE_j)$ of $\hT$.
    \FOR{each $\he=\hu\hv\in \hE_j$, $u=\psi(\hu)$ and $v=\psi(\hv)$}
       \FOR{$\ell=1$ \TO $(4\beta+2)\ln D$} \label{algo:tdst:path-sampling}
           \STATE Sample a $u,v$-path $P^{\he}_{\ell}$ in $G$ from the distribution  $\distP_{\he}$.
                    \ENDFOR
                 \ENDFOR  
        \STATE Let $H_j$ be the union of all sampled paths          
\ENDFOR
\RETURN $H := \cup_{j}H_{j}$.
\end{algorithmic}
\end{algorithm}

The GKR Rounding algorithm is discussed in Section~\ref{sec:GKR-Rounding}.
The construction of path distributions is presented in
Section~\ref{sec:path-distribution}.
We then analyze our algorithm in Section~\ref{sec:analysis}.

\subsection{GKR Rounding}
\label{sec:GKR-Rounding}

Let $\hat{T}$ be the shallow tree. 
We may think that each edge is directed from the root. 
Let $\rho(\he)$ denote a parent of an edge $\he\in
\hE$, i.e., $\rho(\he)$ is an edge adjacent to $\he$ that is
closer to the root. 
Consider the constrains $LP_{gst}$ on variables $\hx_{\he}$ and $\hf^t_{\he}$. This is indeed the standard LP for \tgst. Hence, we can apply GKR rounding. Assume w.l.o.g. that $\hat{x}_{\he} \leq \hat{x}_{\rho(\he)}$.
GKR algorithm considers edges in order of increasing distance from the root. Each edge $\he=\hat{r}\hv$ incident to the
root $\hat{r}$ is marked independently with probability $\hx_{\he}$. Any other edge $\he\in \hE$ whose parent edge has been marked is marked independently with   
probability $\hat{x}_{\he}/\hat{x}_{\rho(\he)}$.
Each marked edge is added to the output tree.
In our case, this gives the graph $\hat{H}_j$.

Next lemma summarizes the properties of GKR Rounding that we will need in the analysis.

\begin{lemma} [\cite{GKR00,R11}]
\label{lem:property-GKR}
Consider the run of GKR rounding algorithm on a $D$-shallow
tree $\hat{T}$ with variables 
$\{\hx_{\he},\hf^t_{\he}\}_{\he\in \hE,t\in S}$ given by a fractional solution to the standard \gst\ LP. Let $\hat{H}$ be the solution sampled by the algorithm, $t\in S$, and $\mu_t:=\sum_{\he\in\delta^{in}_{\hT}(\hS_t)}\hf^t_{\he}$. Then 
$$
\Pr[\he \in \hH]=\hx_{\he}\quad\mbox{and}\quad
\Pr[\mbox{$\hat{H}$ contains an $\hat{r},\hS_t$-path}]
\geq
  \frac{\mu_t}{2D}.
$$
\end{lemma}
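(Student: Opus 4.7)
The plan is to treat the two claims separately. For the first, I would unroll the recursive definition of GKR rounding: if $\he_1, \he_2, \ldots, \he_d = \he$ is the root-to-$\he$ sequence of edges in $\hT$, then $\he \in \hH$ iff every $\he_i$ is marked, which happens with probability $\hx_{\he_1} \cdot \prod_{i=2}^{d} \bigl(\hx_{\he_i}/\hx_{\he_{i-1}}\bigr) = \hx_{\he_d}$ by telescoping; the well-definedness of the conditional probabilities uses the assumption $\hx_{\he_i} \leq \hx_{\rho(\he_i)}$.

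For the second claim, the plan is to apply the Paley-Zygmund inequality to a flow-weighted sum of path-indicators. I would view $\{\hf^t_{\he}\}$ as a fractional flow of value $\mu_t$ from $\hat{r}$ to $\hS_t$ in $\hT$ that is dominated edgewise by $\hx$, and decompose it into $\hat{r}$-to-$\hS_t$ path flows $\{\pi_P\}$ satisfying $\sum_P \pi_P = \mu_t$ and $\sum_{P \ni \he} \pi_P \leq \hx_{\he}$. Letting $Z_P$ be the indicator that every edge of $P$ is marked and $\mathrm{end}(P)$ the last edge of $P$, I would set $Y := \sum_P (\pi_P / \hx_{\mathrm{end}(P)}) Z_P$. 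By the first claim $\mathbb{E}[Z_P] = \hx_{\mathrm{end}(P)}$, so $\mathbb{E}[Y] = \mu_t$; moreover, $\{Y > 0\}$ implies that some flow path is fully marked, so $\hH$ contains an $\hat{r},\hS_t$-path.

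The main obstacle is controlling the second moment. The tree structure of GKR rounding gives, for any two flow paths $P,P'$ with deepest shared edge $\he^*$, the identity $\Pr[Z_P Z_{P'} = 1] = \hx_{\mathrm{end}(P)}\,\hx_{\mathrm{end}(P')}/\hx_{\he^*}$, since the tails of $P$ and $P'$ lie in disjoint subtrees below $\he^*$ and are conditionally independent given that $\he^*$ is marked. Hence $\mathbb{E}[Y^2] = \sum_{P,P'} \pi_P \pi_{P'}/\hx_{\mathrm{LCA}(P,P')}$. Fixing $P$ and grouping $P'$ by their shared edge $\he^*$ (which must lie on $P$), flow conservation gives $\sum_{P' : \mathrm{LCA}(P,P')=\he^*} \pi_{P'} \leq \hx_{\he^*}$, so $\sum_{P'} \pi_{P'}/\hx_{\mathrm{LCA}(P,P')} \leq |P| \leq D$; summing over $P$ yields $\mathbb{E}[Y^2] \leq D \mu_t$. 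Paley-Zygmund then gives $\Pr[\hH \text{ contains an } \hat{r},\hS_t\text{-path}] \geq \Pr[Y > 0] \geq (\mathbb{E}[Y])^2/\mathbb{E}[Y^2] \geq \mu_t/D \geq \mu_t/(2D)$, as claimed.
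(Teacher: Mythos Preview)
The paper does not give its own proof of this lemma; it is quoted from \cite{GKR00,R11}, and your argument---telescoping for $\Pr[\he\in\hH]=\hx_{\he}$ and a second-moment/Paley--Zygmund bound for the connectivity probability---is exactly the standard proof from those references.

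One small gap in your second-moment step: you write that the deepest shared edge $\he^*$ ``must lie on $P$'', but two flow paths can diverge immediately at the root and share no edge at all. For such pairs $Z_P,Z_{P'}$ are independent, contributing $\pi_P\pi_{P'}$ to $\mathbb{E}[Y^2]$; this adds an extra term of at most $\mu_t$ to the inner sum $\sum_{P'}\pi_{P'}/\hx_{\mathrm{LCA}(P,P')}$, giving $\mathbb{E}[Y^2]\le (D+\mu_t)\mu_t$ rather than $D\mu_t$. Paley--Zygmund then yields $\Pr[Y>0]\ge \mu_t/(D+\mu_t)\ge \mu_t/(2D)$ whenever $\mu_t\le D$, which holds in every application in the paper (where $\mu_t\le 2$ and $D\ge 2$), so the stated bound is recovered.
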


\subsection{Constructing Path Distributions}
\label{sec:path-distribution}

Now we discuss how to construct a path distribution $\distP_{\he}$
on each edge $\he=\hu\hv\in \hE$.
Let $u=\psi(\hat{u})$ and $v=\psi(\hat{v})$.
Observe that the variables 
$F=\{f_{\he,e}\}_{e\in E}$ form a $u,v$-flow.
Thus, we can decompose $F$ into a collection of flow paths,
say $\{f^{\he}_{P}\}_{P\in \calP_{uv}}$,
where $\calP_{uv}$ is the set of all $u,v$-paths in $G$, so that
\[
\sum_{P\in\calP_{uv}:e\in E(P)}f^{\he}_P = f_{\he,e}.
\]
The value of the flow $F$ is
$\hat{x}_{\he}=\sum_{e\in\delta^{out}_{G}(u)}f_{\he,e}$. 
Thus, $\{f^{\he}_P/\hx_{\he}\}_{P\in\calP_{uv}}$ gives a
collection of flow paths whose total flows is one,
and we can interpret this as a distribution over flow paths.
This implies the following lemma. 

\begin{lemma}
\label{lem:path-distribution}
Consider an edge $\he=\hu\hv\in \hE$ and its corresponding 
variables $\{f_{\he,e}\}_{e\in E}$.
Let $u=\psi(\hat{u})$ and $v=\psi(\hat{v})$.
Then there exists a distribution of $u,v$-paths $\distP_{\he}$
such that for all $e \in E$:
\[
\Pr_{P\sim\distP_{\he}}[e \in E(P)] 
  = \frac{1}{\hat{x}_{\he}}\cdot f_{\he,e}
\]
\end{lemma}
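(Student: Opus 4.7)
The plan is to interpret the LP variables $\{f_{\he,e}\}_{e\in E}$ as a $u,v$-flow and apply the standard flow-decomposition theorem. First I would read off from the constraints $LP_{cong}$ that $F := \{f_{\he,e}\}_{e\in E}$ is indeed a $u,v$-flow of total value $\hat{x}_{\he}$: the out-flow from $u = \psi(\hu)$ equals $\hat{x}_{\he}$, the in-flow into $u$ is $0$, and flow conservation holds at every intermediate vertex $w\neq u,v$. By conservation, the net flow arriving at $v=\psi(\hv)$ must also equal $\hat{x}_{\he}$. If $\hat{x}_{\he}=0$ then all $f_{\he,e}=0$ and the lemma is vacuous, so I will assume $\hat{x}_{\he}>0$ from now on.

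Second, I would note that without loss of generality $F$ can be taken to be acyclic. If $F$ contains a nonzero cyclic component, subtracting that cyclic flow from $F$ preserves the value of the $u,v$-flow and only decreases each $f_{\he,e}$; the inequalities $f_{\he,e}\leq x_e$ and $\sum_{\he}f_{\he,e}\leq \beta x_e$ in $LP_{cong}$ therefore remain valid, and no other constraint in LP-\tdst depending on $f_{\he,e}$ is violated (the variables $f^t_{\he,e}$ and $\hat{f}^t_{\he}$ are not affected). So I may assume $F$ is an acyclic $u,v$-flow of value $\hat{x}_{\he}$.

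Third, I would invoke the classical flow-decomposition theorem for acyclic flows: there exists a finite collection of nonnegative weights $\{f^{\he}_P\}_{P\in \calP_{uv}}$, supported on simple $u,v$-paths, such that
\[
\sum_{P\in \calP_{uv}} f^{\he}_P \;=\; \hat{x}_{\he}
\qquad\text{and}\qquad
\sum_{P\in \calP_{uv}:\, e\in E(P)} f^{\he}_P \;=\; f_{\he,e}
\quad\forall e\in E.
\]
Defining $\distP_{\he}$ to be the distribution that picks each path $P$ with probability $f^{\he}_P/\hat{x}_{\he}$ then yields a genuine probability distribution over $\calP_{uv}$ (the weights are nonnegative and sum to $1$). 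Finally, the marginal probability that a sampled path uses a given edge $e$ is
\[
\Pr_{P\sim\distP_{\he}}[e\in E(P)] \;=\; \sum_{P\in\calP_{uv}:\,e\in E(P)}\frac{f^{\he}_P}{\hat{x}_{\he}} \;=\; \frac{f_{\he,e}}{\hat{x}_{\he}},
\]
which is exactly the identity claimed. The only non-routine point in the argument is the acyclicity reduction; everything else is a direct application of flow decomposition followed by normalization.
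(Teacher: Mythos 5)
Your proof follows the same route as the paper's: observe from $LP_{cong}$ that $\{f_{\he,e}\}_{e\in E}$ is a $u,v$-flow of value $\hat{x}_{\he}$, apply flow decomposition, and normalize by $\hat{x}_{\he}$. Your extra care about the two degenerate cases — $\hat{x}_{\he}=0$ and the possible presence of cyclic flow — is appropriate, since the paper silently asserts a path-only decomposition (which only exists with equality when the flow is acyclic) without justification.

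However, your acyclicity reduction has a gap. You claim that after cancelling cyclic flow, ``no other constraint in LP-\tdst\ depending on $f_{\he,e}$ is violated (the variables $f^t_{\he,e}$ and $\hat{f}^t_{\he}$ are not affected).'' But $LP_{div}$ contains the constraint $f^t_{\he,e} \leq f_{\he,e}$, which couples the two variable families: if you decrease $f_{\he,e}$ on a cycle edge without decreasing the corresponding $f^t_{\he,e}$, that constraint can become violated. To repair it you would have to also cancel cyclic flow in $\{f^t_{\he,e}\}_{e\in E}$ — and there is no guarantee that the cycles in the two flows coincide — so the ``without loss of generality'' claim does not follow as stated. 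This matters later, since Lemma~\ref{lem:flow-on-good-paths} manipulates $f_{\he,e}$ and $f^t_{\he,e}$ jointly via exactly this inequality.

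The cleaner way out, which also repairs the (unstated) gap in the paper's version, is to notice that the lemma is only ever invoked as an upper bound $\Pr_{P\sim\distP_{\he}}[e\in E(P)] \leq f_{\he,e}/\hat{x}_{\he}$ (in Lemma~\ref{lem:expected-cost} and Lemma~\ref{lem:good-path-integral}). The path-and-cycle decomposition theorem, applied to an arbitrary (possibly cyclic) flow, already gives simple-path weights $\{f^{\he}_P\}$ summing to $\hat{x}_{\he}$ with $\sum_{P\ni e} f^{\he}_P \leq f_{\he,e}$ for every $e$, so the normalized distribution satisfies the $\leq$ bound unconditionally, with no WLOG needed and no change to the LP solution. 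Stating the lemma with ``$\leq$'' therefore removes the subtlety entirely without affecting any downstream argument.
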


\subsection{Analysis}
\label{sec:analysis}

Next we analyze Algorithm~\ref{algo:main-algo}. We start with the simpler part of our analysis, namely bounding the expected cost of $H$. 

\begin{lemma}
\label{lem:expected-cost}
The expected cost of $H$ is 
$O(D^3h^{2/D} \log D\log n) \cdot \sum_{e\in E}c_ex_e$.
\end{lemma}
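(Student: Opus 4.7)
The plan is to bound $\E[c(H)]$ by linearity of expectation, exploiting that the cost of a union of edge sets is at most the sum of their costs. I will first analyze a single iteration $j$ and then multiply by the number of iterations $20 D \ln n$.

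Fix an iteration $j$. Since $H_j$ is the union of the sampled paths $P^{\he}_\ell$ over $\he \in \hE_j$ and $\ell \in [(4\beta+2)\ln D]$, we have
\[
c(H_j) \;\le\; \sum_{\he \in \hE_j} \sum_{\ell=1}^{(4\beta+2)\ln D} c\bigl(P^{\he}_\ell\bigr).
\]
For a fixed $\he=\hu\hv\in\hE$, Lemma~\ref{lem:path-distribution} yields
\[
\E_{P\sim\distP_{\he}}\bigl[c(P)\bigr]
\;=\; \sum_{e\in E} c_e \,\Pr_{P\sim\distP_{\he}}[e\in E(P)]
\;=\; \sum_{e\in E} c_e \cdot \frac{f_{\he,e}}{\hx_{\he}}.
\]
By Lemma~\ref{lem:property-GKR}, $\Pr[\he\in\hE_j] = \hx_{\he}$, and the inner sampling is independent of GKR rounding. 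Therefore
\[
\E[c(H_j)]
\;\le\; (4\beta+2)\ln D \cdot \sum_{\he\in\hE} \hx_{\he} \cdot \frac{\sum_{e\in E} c_e f_{\he,e}}{\hx_{\he}}
\;=\; (4\beta+2)\ln D \cdot \sum_{e\in E} c_e \sum_{\he\in \hE} f_{\he,e}.
\]
The key step is to invoke the bounded-congestion LP constraint from $LP_{cong}$, namely $\sum_{\he\in\hE} f_{\he,e} \le \beta\, x_e$, which gives
\[
\E[c(H_j)] \;\le\; (4\beta+2)\beta \ln D \cdot \sum_{e\in E} c_e x_e
\;=\; O(\beta^2 \log D) \cdot \sum_{e\in E} c_e x_e.
\]

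Finally, summing over the $20 D \ln n$ iterations and plugging in $\beta = O(D\,h^{1/D})$ from Theorem~\ref{thm:height-reduction-second}, we get $\beta^2 = O(D^2 h^{2/D})$, so
\[
\E[c(H)] \;\le\; \sum_{j=1}^{20 D\ln n} \E[c(H_j)]
\;=\; O(D^3\, h^{2/D}\, \log D\, \log n) \cdot \sum_{e\in E} c_e x_e,
\]
which is the claimed bound. I expect no real obstacle: the analysis is a standard expectation calculation where the only nontrivial ingredients are (i)~the marginal probability $\Pr[\he\in\hH_j] = \hx_{\he}$ from GKR rounding, (ii)~the edge-marginal of the path distribution $\Pr[e\in P] = f_{\he,e}/\hx_{\he}$, and (iii)~the $\beta$-congestion LP constraint. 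The only place where one has to be a bit careful is to observe that the path-sampling step is independent of the GKR-rounding step conditional on $\hE_j$, so that expectations factor as used above.
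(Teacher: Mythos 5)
Your proof is correct and matches the paper's argument essentially line for line: both proceed by computing $\E[c(H_j)]$ via the marginal $\Pr[\he\in\hE_j]=\hx_{\he}$ from GKR rounding, the edge-marginal $f_{\he,e}/\hx_{\he}$ of the path distribution, the $\beta$-congestion constraint of $LP_{cong}$, and then multiply by the $O(D\log n)$ outer iterations with $\beta=O(Dh^{1/D})$. The only cosmetic difference is that you sum edge costs of the sampled paths directly via linearity of expectation, whereas the paper first bounds $\Pr[e\in E_j]$ for each edge $e$ and then sums $c_e\Pr[e\in E_j]$; these are the same calculation in different order.
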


\begin{proof}
Let us bound the expected cost of $H_j=(V_j,E_j)$. For each edge $e\in E$, 
\begin{eqnarray*}
\Pr[e\in E_j]  & \leq &   \sum_{\he\in \hE}
     \Pr\left[\he\in \hE_j\right]\cdot \Pr\left[ 
         \left(e\in \bigcup_{\ell=1}^{(4\beta+2)\ln D}E\left(P^{\he}_{\ell}\right)\right)
     \bigg | \he\in \hE_j \right]\\     
& \overset{\text{Lem. \ref{lem:property-GKR} and~\ref{lem:path-distribution}}}{\leq}  &   \sum_{\he\in \hE}
     \hx_{\he}\cdot O(\beta \log D)\cdot \frac{f_{\he,e}}{\hx_{\he}} \overset{\text{by $LP_{cong}$}}{\leq} O(\beta^2 \log D)\cdot x_e
\end{eqnarray*}
Thus, the expected cost of $H_j$ is $O(\beta^2 \log D)$ times the LP value. 
The claim follows since there are $O(D \log n)$ iterations and $\beta=O(Dh^{1/D})$.
\end{proof}


We next show that our algorithm gives a feasible solution to \tdst
with high probability. This is the most complicated part of the analysis.
%

%
%

The initial part of our analysis resembles the analysis in \cite{CGL15} for \kgst. We prove feasibility using Menger's theorem and a cut argument. 
By Menger's theorem, the solution subgraph $H\subseteq G$ contains two
edge disjoint $r,t$-paths if and only if 
$H\setminus \{e\}$ contains an $r,t$-path for every edge $e\in E(H)$.
Therefore, we will focus on a given such pair $(e,t)$.
Our goal is to show that our rounding algorithm buys with sufficiently high probability an $r,t$-path \emph{not using} the edge $e$. For this purpose, we exploit the fact that, according to Lemma \ref{lem:property-GKR}, if we reduce the flow associated to some group $\hS_t$, GKR algorithm will still connect $\hr$ to $\hS_t$ with sufficiently large probability provided that the residual amount of flow $\mu'_t$ from $\hr$ to $\hS_t$ is large enough. 

At this point, we might try to reduce the flow
by the amount $f_{\he,e}$ for each edge $\he\in \hE$. One can show that
$\mu'_t$ would remain large enough, but unfortunately this in not
sufficient in our case. Indeed, we might still have a fairly high
probability to use the edge $e$ due to the probabilistic distribution
over paths: for any given $\he$, the sampled path $P^{\he}_\ell$
contains $e$ with probability $f_{\he,e}/\hx_{\he}$. So, we can
``safely'' use the edge $\he$ only if the complementary probability
$(\hx_{\he}-f_{\he,e})/\hx_{\he}$ is sufficiently large. We say that
an edge of the latter type is good, and we wish to use only good edges. 

\bun{Formally}, we say that an edge $\he\in \hE$ is {\em good} against $e$ if
\begin{align*}
\hx_{\he} - f_{\he,e} 
   \geq \frac{1}{2\beta}\cdot f_{\he,e}.
\end{align*}
Otherwise, we say that $\he$ is {\em bad} against $e$.
If the edge $e$ is clear from the context, we will simply
say that $\he$ is bad (respectively, good).
Similarly, we say that a path $\hat{P}$ in $\hat{T}$ is 
{\em good} (against $e$) if all edges of $\hat{P}$ are
good. Otherwise, we say that $\hat{P}$ is bad.

We claim that we can route an $\hat{r},\hS_t$-flow of value at least $1/2$
using only good edges and even after decreasing the capacity of edge $\he$ by $f_{\he,e}$.
In particular, we prove the following lemma.

%

\begin{lemma}
\label{lem:flow-on-good-paths}
Let $e\in E$ and $t\in S$. Let $\hE_{bad}\subseteq \hE$ be the subset of bad edges  against $e$, and $\hE_{good}=\hE\setminus \hE_{bad}$. Consider the shallow tree $\hT' = \hT \setminus \hat{E}_{bad}$
with capacities $\{\hx'_{\he}\}_{\he\in \hE}$,
where $\hx'_{\he}=\hx_{\he}-f_{\he,e}$
for all edges $\he\in \hT$. Then $\hT'$ supports an $\hat{r},\hat{S}_t$-flow of value at least $1/2$.
\end{lemma}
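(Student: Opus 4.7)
The plan is to apply max-flow/min-cut in $\hT'$: it suffices to show that every $\hr,\hS_t$-cut of $\hT'$ has capacity at least $1/2$. The key ingredient I would first establish is the pointwise inequality
\[
\hf^t_{\he} \leq (\hx_{\he}-f_{\he,e}) + f^t_{\he,e} \qquad \forall \he\in\hE.
\]
This follows from the $LP_{div}$ constraint $f^t_{\he,e'}\leq f_{\he,e'}$ for every $e'\in E$, together with the flow-conservation constraints of $LP_{cong}$ and $LP_{div}$ (and a standard cycle-cancellation preprocessing so that $f_{\he,\cdot}-f^t_{\he,\cdot}$ is an acyclic non-negative flow): the residual is then a $\psi(\hu),\psi(\hv)$-flow of value $\hx_{\he}-\hf^t_{\he}$, whose value $f_{\he,e}-f^t_{\he,e}$ on any single edge $e$ is at most this total flow value.

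Next, I would fix any $\hr,\hS_t$-cut $\hat{C}'$ of $\hT'$ and let $S$ be the connected component of $\hr$ in $\hT'\setminus\hat{C}'$. Setting $\hat{C}_b := \delta^{out}_{\hT}(S)\cap\hE_{bad}$, the set $\hat{C} := \hat{C}'\cup\hat{C}_b = \delta^{out}_{\hT}(S)$ is an $\hr,\hS_t$-cut in $\hT$, since every $\hv\in\hS_t$ lies outside $S$---either because $\hat{C}'$ separates it in $\hT'$, or because every $\hr$-$\hv$ path in $\hT$ uses a bad edge. Because $\hT$ is rooted at $\hr$ with all edges oriented away from $\hr$, there is no backflow across $\delta^{out}(S)$, so the $LP_{gst}$ flow $\hf^t$ satisfies $\sum_{\he\in\hat{C}}\hf^t_{\he} \geq 2$.

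Plugging the pointwise inequality into this bound and splitting $\hat{C}$ into good and bad parts gives
\[
\sum_{\he\in\hat{C}'}(\hx_{\he}-f_{\he,e}) \geq 2 - \sum_{\he\in\hat{C}_b}(\hx_{\he}-f_{\he,e}) - \sum_{\he\in\hat{C}}f^t_{\he,e}.
\]
For $\he\in\hat{C}_b$ the definition of bad gives $\hx_{\he}-f_{\he,e} < f_{\he,e}/(2\beta)$; summing and applying $LP_{cong}$ yields $\sum_{\he\in\hat{C}_b}(\hx_{\he}-f_{\he,e}) \leq \frac{1}{2\beta}\sum_{\he\in\hE}f_{\he,e}\leq \frac{\beta x_e}{2\beta}\leq \frac{1}{2}$. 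The divergency constraint in $LP_{div}$ bounds the last sum: $\sum_{\he\in\hat{C}}f^t_{\he,e}\leq \sum_{\he\in\hE}f^t_{\he,e}\leq x_e\leq 1$. Combining, the capacity of $\hat{C}'$ in $\hT'$ is at least $2-\tfrac{1}{2}-1 = \tfrac{1}{2}$, as required.

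The step I expect to require the most care is the pointwise inequality, and specifically arguing that $f_{\he,\cdot}-f^t_{\he,\cdot}$ can be treated as an acyclic non-negative flow so that the single-edge bound $f_{\he,e}-f^t_{\he,e}\leq \hx_{\he}-\hf^t_{\he}$ is justified (a naive LP-only manipulation fails when the flows contain cycles). Everything else reduces to a cut-counting argument in which the two LP capacity bounds---$\beta x_e$ from $LP_{cong}$ on $\sum_{\he} f_{\he,e}$ and $x_e$ from $LP_{div}$ on $\sum_{\he} f^t_{\he,e}$---play complementary roles in absorbing the bad-edge slack and the divergency contribution, respectively.
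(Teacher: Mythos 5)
Your proof is correct and follows essentially the same two-phase structure as the paper's: (i) establish the pointwise inequality $\hf^t_{\he} - f^t_{\he,e} \leq \hx_{\he} - f_{\he,e}$, and (ii) apply it inside the identical cut-counting argument, budgeting the bad-edge slack against the $\beta x_e$ bound from $LP_{cong}$ and the divergency contribution against the $x_e$ bound from $LP_{div}$.

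The one place you diverge is the justification of the pointwise inequality. The paper argues via a minimum $u,v$-cut $U^*$ containing $e$ in the network with capacities $\{f_{\he,e'}\}_{e'}$, asserting that since this capacity function is itself a flow, every positively-capacitated edge lies in some min cut; you instead note that $g := f_{\he,\cdot}-f^t_{\he,\cdot}$ is a nonnegative $u,v$-flow of value $\hx_{\he}-\hf^t_{\he}$ and bound $g_e$ by the flow value. Both derivations require that the relevant flow contains no directed cycles --- a flow with cycles can carry more on a single edge than its value, and an edge on such a cycle need not lie in any min cut --- so the paper's ``minimal flow network'' claim implicitly assumes acyclicity just as yours does. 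You explicitly flag the need for cycle-cancellation preprocessing, which the paper omits; one still needs to verify that the cancellation (applied first to each $f^t_{\he,\cdot}$, then to $f_{\he,\cdot}$) preserves the coupling constraints $f^t_{\he,\cdot}\leq f_{\he,\cdot}$, the flow-value equalities, and the aggregate capacity bounds, but this is routine. So your argument is not a different route; it is the same route with a marginally more transparent (and more honestly flagged) treatment of the acyclicity assumption underlying the key inequality.
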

\begin{proof}
Recall that 
$\{\hf^t_{\he}\}_{\he\in \hE}$ supports an $\hat{r},{S}_t$-flow of value at 
least $2$ on $\hat{T}$.
Thus, for any cut $\hat{U}$ that separates $\hat{r}$ and $\hS_t$,
(i.e., $\hat{U}\subseteq \hV$, $\hat{r}\in\hat{U}$ and
$\hat{S}_t\subseteq (\hV\setminus \hat{U})$),
we must have
\[
\sum_{\he\in\delta^{out}_{\hT}(\hat{U})}\hf^t_{\he}\geq{2}.
\]
We will prove later the following inequality 
\begin{align}
\label{eq:flow-on-bad-edge}
 \hf_{\he}^t - f^t_{\he,e} \leq \hx_{\he} - f_{\he,e}
\mbox{ for any edge $\he \in \hE$}.
\end{align}

Now we consider the capacities of edges leaving $\hat{U}$ 
in the absence of bad edges.
\begin{align*}
\sum_{\he\in\delta^{out}_{\hT}(\hat{U})\cap \hE_{good}}
      \hx'_{\he}
  &=  
\sum_{\he\in\delta^{out}_{\hT}(\hat{U})\cap \hE_{good}}
      (\hx_{\he}-f_{\he,e})=
   \sum_{\he\in\delta^{out}_{\hT}(\hat{U})}
      (\hx_{\he}-f_{\he,e})
   - \sum_{\he\in\delta^{out}_{\hT}(\hat{U})\cap \hE_{bad}}
      (\hx_{\he}-f_{\he,e})\\
   & \overset{\text{By \eqref{eq:flow-on-bad-edge}}}{\geq}     
   \sum_{\he\in\delta^{out}_{\hT}(\hat{U})}
      (\hf^t_{\he}-f^t_{\he,e})
   - \sum_{\he\in\delta^{out}_{\hT}(\hat{U})\cap \hE_{bad}}
      (\hx_{\he}-f_{\he,e})\\
      & \overset{\text{by $LP_{gst}$}}{\geq} 
   2-\sum_{\he\in\delta^{out}_{\hT}(\hat{U})}f^t_{\he,e}
   - \sum_{\he\in\delta^{out}_{\hT}(\hat{U})\cap \hE_{bad}}
      (\hx_{\he}-f_{\he,e})\\ 
      & \overset{\text{by def. bad}}{\geq} 
   2-\sum_{\he\in\delta^{out}_{\hT}(\hat{U})}f^t_{\he,e}
   - \frac{1}{2\beta}\sum_{\he\in\delta^{out}_{\hT}(\hat{U})\cap \hE_{bad}}
      f_{\he,e} \geq    2-\sum_{\he\in \hE}f^t_{\he,e}
   - \frac{1}{2\beta}\sum_{\he\in \hE}
      f_{\he,e}\\
     &   \overset{\text{by $LP_{div}$}}{\geq}    2-x_e
   - \frac{1}{2\beta}\sum_{\he\in \hE}
      f_{\he,e} \overset{\text{by $LP_{cong}$}}{\geq} 2-x_e -\frac{1}{2\beta}\cdot \beta x_e \overset{x_e\leq 1}{\geq} \frac{1}{2}    
&          
\end{align*}
Thus, by the Max-Flow-Min-Cut Theorem, 
the network $\hT'$ with capacities 
$\{\hx'_{\he}\}_{\he\in \hE}$ supports
an $\hat{r},\hat{S}_t$-flow of value at least 1/2.

It remains to prove \eqref{eq:flow-on-bad-edge}. The claim is trivially true if $f_{\he,e}=0$ since it implies $f^t_{\he,e}=0$. So, let us assume that $e$ belongs to the support of $\{f_{\he,e'}\}_{e'\in E}$. Again, we use the Max-Flow-Min-Cut Theorem.
Consider $\he=\hu\hv\in \hE$, and let $u=\psi(\hat{u})$ and
$v=\psi(\hat{v})$. 
By the constraints of $LP_{cong}$,
the graph $G$ with capacities $\{f_{\he,e'}\}_{e'\in E}$
supports a $u,v$-flow of value  $\hx_{\he}$. There must exist a minimum $u,v$-cut $U^*$ that contains the edge $e$, provided that $f_{\he,e}>0$. 
To see this, observe that $\{f_{\he,e'}\}_{e'\in E}$ induces a minimal flow network (as it is a flow itself), i.e., decreasing the capacity of any edge decreases the value of maximum flow by the same amount. So, every edge with positive capacity must contain in some minimum cut.
Consequently, we have 
\begin{align*}
\hx_{\he} - f_{\he,e} 
  =&    \left(\sum_{e'\in \delta^{out}_{G}(U^*)}f_{\he,e'}\right) - f_{\he,e} =    \sum_{e'\in (\delta^{out}_{G}(U^*)\setminus \{e\})}f_{\he,e'}\\
   \overset{\text{By $LP_{div}$}}{\geq} & \sum_{e'\in (\delta^{out}_{G}(U^*)\setminus e)}f^t_{\he,e'}
  =    \left(\sum_{e'\in \delta^{out}_{G}(U^*)}f^t_{\he,e'} \right) - f^t_{\he,e} \geq \hf^t_{\he} - f^t_{\he,e}
\end{align*}
This completes the proof.

\end{proof}

Next consider any good path against $e$ in $\hat{T}$, say 
$\hat{P}\subseteq\hat{T}$, that connects $\hat{r}$ to a group $\hat{S}_t$.
Then $\hat{P}$ maps to an $r,t$-path in the original graph $G$ 
that does not use $e$ with probability at least $1-1/D$.  

\begin{lemma}
\label{lem:good-path-integral}
Let $e\in E$ and $\hat{P}\subseteq \hT$ be a good $\hat{r},\hat{S}_t$-path 
against $e$.
Suppose we map $\hat{P}$ to a subgraph $Q\subseteq G$ by 
sampling $(4\beta+2)\ln D$ paths from the distribution $\distP_{\he}$
for each $\he\in E(\hat{P})$. 
Then $Q\setminus\{e\}$ contains an $r,t$-path with probability at least $1-1/D$.
\end{lemma}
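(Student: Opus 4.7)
The plan is to bound, for each edge $\he$ on $\hat{P}$, the probability that every one of the $(4\beta+2)\ln D$ samples $P^{\he}_\ell$ contains $e$, then take a union bound over the (at most $D$) edges of $\hat{P}$. If at every edge $\he$ of $\hat{P}$ some sample avoids $e$, we can stitch these $\psi(\hu),\psi(\hv)$-subpaths together along $\hat{P}$ to obtain an $r,t$-walk in $Q\setminus\{e\}$, which contains an $r,t$-path.

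The first step is to translate the goodness condition on $\he$ into a lower bound on the per-sample avoidance probability. By Lemma~\ref{lem:path-distribution}, a single sample from $\distP_{\he}$ contains $e$ with probability $f_{\he,e}/\hx_{\he}$. The goodness inequality $\hx_{\he}-f_{\he,e}\geq \frac{1}{2\beta}f_{\he,e}$ rearranges to $f_{\he,e}/\hx_{\he}\leq \frac{2\beta}{2\beta+1}$, so
\[
\Pr_{P\sim\distP_{\he}}[e\notin E(P)]\;=\;\frac{\hx_{\he}-f_{\he,e}}{\hx_{\he}}\;\geq\;\frac{1}{2\beta+1}.
\]

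The second step is to amplify by independence of the $(4\beta+2)\ln D$ draws at each $\he$: the probability that \emph{all} of them contain $e$ is at most
\[
\left(1-\frac{1}{2\beta+1}\right)^{(4\beta+2)\ln D}\;\leq\;e^{-2\ln D}\;=\;\frac{1}{D^{2}}.
\]
Since $\hT$ has depth at most $D$, the good path $\hat{P}$ consists of at most $D$ edges, so by a union bound the probability that some edge $\he\in E(\hat{P})$ has all of its samples hitting $e$ is at most $D\cdot 1/D^{2}=1/D$.

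The third step handles the complementary event. Write $\hat{P}=\hat{v}_{0}\hat{v}_{1}\cdots\hat{v}_{k}$ with $\hat{v}_{0}=\hat{r}$ and $\hat{v}_{k}\in\hat{S}_{t}$, so $\psi(\hat{v}_{0})=r$ and $\psi(\hat{v}_{k})=t$. When the good event occurs, for each edge $\he_{i}=\hat{v}_{i-1}\hat{v}_{i}$ we can pick one sample $P^{\he_{i}}_{\ell_{i}}$ that avoids $e$; this is a $\psi(\hat{v}_{i-1}),\psi(\hat{v}_{i})$-path in $G$, and concatenating for $i=1,\dots,k$ yields an $r,t$-walk in $Q$ that avoids $e$. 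Extracting a simple path from this walk gives the desired $r,t$-path in $Q\setminus\{e\}$. I do not expect any real obstacle here: the only subtlety is matching $(4\beta+2)\ln D$ with the target failure probability $1/D$, which is exactly tuned so that $e^{-2\ln D}\cdot D=1/D$.
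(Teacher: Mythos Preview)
Your proof is correct and follows essentially the same approach as the paper: bound the single-sample hitting probability via the goodness condition, amplify over $(4\beta+2)\ln D$ independent draws to get failure probability at most $1/D^2$ per edge, union bound over the at most $D$ edges of $\hat P$, and then concatenate the avoiding subpaths into an $r,t$-walk in $Q\setminus\{e\}$. The paper's argument is the same in all substantive respects.
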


\begin{proof}
Consider an edge $\he=\hu\hv \in E(\hat{P})$. 
By Lemma~\ref{lem:path-distribution}, we have that 
any path $P$ sampled from $\distP_{\he}$
contains $e$ with probability
\[
\Pr_{P\sim\distP_{\he}}[e\in E(P)] 
  \leq \frac{f_{\he,e}}{\hx_{\he}}
  \overset{def. good}{\leq} 1-\frac{1}{2\beta+1}
\]
Since we sample $(4\beta+2)\ln D$ paths from $\distP_{\hat{a}}$,
the probability that all the sampled paths contain $e$ is
at most 
\[
\Pr[\mbox{all paths $P$ sampled from $\distP_{\he}$
    contain $e$}]
\leq \left(1-\frac{1}{2\beta+1}\right)^{2(2\beta+1)\ln D}
\leq \left(\frac{1}{{\bf e}}\right)^{2\ln D}
\leq \frac{1}{D^2}
\]
(here ${\bf e}$ is the base of the natural logarithm.)
We recall that $\hat{P}$ has length at most $D$.
Thus, by the union bound, 
\[
\Pr[\exists {\he\in E(\hat{P})} \mbox{ s.t. }
    \mbox{all the paths $P$ sampled from $\distP_{\he}$ 
          contain $e$}]
\leq D \cdot \frac{1}{D^2} = \frac{1}{D}.
\]
We conclude that, with probability at least $1-1/D$, for each $\he\in E(\hat{P})$ we sample at least one path in $G$ that avoids $e$: the union of such avoiding paths forms a (possibly non-simple) $r$-$t$ path that avoids $e$.
\end{proof}

Now are ready to prove the feasibility of our solution $H$
obtained from Algorithm~\ref{algo:main-algo}.

\begin{lemma}
\label{lem:output-two-connected}
The subgraph $H$ returned from Algorithm~\ref{algo:main-algo} is a feasible \tdst solution w.h.p.
\end{lemma}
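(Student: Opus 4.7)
By Menger's theorem, $H$ is a feasible \tdst solution iff, for every edge $e\in E$ and every terminal $t\in S$, the graph $H\setminus\{e\}$ contains an $r,t$-path. Hence my plan is to fix an arbitrary pair $(e,t)$, establish that $H\setminus\{e\}$ contains an $r,t$-path with probability at least $1-n^{-c}$ for a sufficiently large constant $c$, and then close with a union bound over the $\mathrm{poly}(n)$ such pairs.

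Fix $(e,t)$ and a single iteration $j$ of Algorithm~\ref{algo:main-algo}. I would argue that iteration $j$ alone produces $H_j$ such that $H_j\setminus\{e\}$ contains an $r,t$-path with probability $\Omega(1/D)$. Let $A$ be the event that the sampled tree $\hH_j$ contains an $\hr,\hS_t$-path whose edges are all good against $e$, and let $B$ be the event that, conditioned on $A$, the union of the $(4\beta+2)\ln D$ paths sampled in the path-mapping step for those edges contains an $r,t$-path of $G$ avoiding $e$. Event $B$ is handled by Lemma~\ref{lem:good-path-integral}, which gives $\Pr[B\mid A]\geq 1-1/D$. For event $A$, I would apply Lemma~\ref{lem:flow-on-good-paths} to secure an $\hr,\hS_t$-flow of value at least $1/2$ supported on good edges and bounded by the capacities $\hx_{\he}$. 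Consider the subtree $\hT^{\hr}_{good}$ of $\hT$ formed by the edges reachable from $\hr$ using only good edges: for any $\he\in\hT^{\hr}_{good}$ the unique $\hT$-path from $\he$ to $\hr$ lies in $\hT^{\hr}_{good}$, so the parent of $\he$ and its GKR marking probability $\hx_{\he}/\hx_{\rho(\he)}$ coincide in the actual execution on $\hT$ and in a hypothetical execution restricted to $\hT^{\hr}_{good}$. Applying Lemma~\ref{lem:property-GKR} to the latter (with the flow of value $\geq 1/2$ from Lemma~\ref{lem:flow-on-good-paths}) gives $\Pr[A]\geq (1/2)/(2D)=1/(4D)$. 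Combining, $\Pr[A\cap B]\geq (1-1/D)/(4D)=\Omega(1/D)$.

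Since the $20 D\ln n$ iterations are independent, the probability that every iteration fails for the fixed pair $(e,t)$ is at most $(1-\Omega(1/D))^{20 D\ln n}\leq e^{-\Omega(\ln n)}=n^{-\Omega(1)}$; enlarging the constant in the iteration count makes this $n^{-c}$ for any desired $c$. A union bound over $|E|\cdot |S|\leq\mathrm{poly}(n)$ pairs then yields feasibility of $H$ with high probability.

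The main obstacle is event $A$: Lemma~\ref{lem:property-GKR} a priori only promises \emph{some} connecting path in $\hH_j$, whereas we need this path to consist entirely of good edges against $e$. The workaround is precisely the content of Lemma~\ref{lem:flow-on-good-paths}: a witness flow fully supported on $\hE_{good}$ whose value remains a constant even after removing bad edges and subtracting $f_{\he,e}$ from each capacity. Reinterpreting GKR as restricted to the good subtree reachable from $\hr$ turns this flow into the input witness of Lemma~\ref{lem:property-GKR}, yielding the desired good-path event, while the coupling with the actual GKR execution on $\hT$ is immediate from the coincidence of parents and marking probabilities on $\hT^{\hr}_{good}$. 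The remaining steps---Menger's reduction, iteration independence, and the union bound---are routine.
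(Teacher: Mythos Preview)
Your proposal is correct and follows essentially the same route as the paper: Menger's reduction, then for a fixed $(e,t)$ combine Lemma~\ref{lem:flow-on-good-paths} with Lemma~\ref{lem:property-GKR} to get a good $\hr,\hS_t$-path in $\hH_j$ with probability $\geq 1/(4D)$, then Lemma~\ref{lem:good-path-integral} for the path-mapping step, giving $(1-1/D)/(4D)\geq 1/(5D)$ per iteration, and finally independence over $20D\ln n$ rounds plus a union bound over $n\cdot n^2$ pairs. Your explicit coupling argument via $\hT^{\hr}_{good}$ is a welcome elaboration of a step the paper leaves implicit when invoking Lemma~\ref{lem:property-GKR} on the good-edge subtree.
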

\begin{proof}
By Menger's theorem, $H$ is a feasible \tdst solution iff for every edge $e\in E$ and terminal $t\in S$,  
$H\setminus \{e\}$ contains an $r,t$-path.

We claim that each subgraph $H_j\setminus \{e\}$ contains an
$r,t$-path with probability at least $1/(5D)$. First observe that, by Lemma~\ref{lem:flow-on-good-paths}, the capacities $\{\hx'_{\he}\}_{\he\in \hE}$ support
an $\hat{r},\hS_t$-flow of value at least $1/2$ through good paths. Thus, by Lemma~\ref{lem:property-GKR}, the GKR rounding algorithm guarantees that    
$\hH_j$ contains a good $\hat{r},\hat{S}_t$-path with probability at least $1/(4D)$. Given the existence of a good path in $\hH_j$, by Lemma ~\ref{lem:good-path-integral}, $H_j$ contains an $r,t$-path avoiding $e$ with probability at least $(1-1/D)$. Altogether, $H_j$ contains such a path with probability at least $(1-1/D)/(4D)\geq 1/(5D)$.

Since $H$ is a union of $20 D\ln n$ subgraphs $H_j$'s sampled
independently, the probability that no subgraphs $H_j$ contain an
$r,t$-path is at most
\[
\left(1-\frac{1}{5D}\right)^{20 D\ln n} \leq \frac{1}{n^4}.
\]
As we have at most $n$ terminals and at most $n^2$ edges,
it follows by the union bound that $H\setminus \{e\}$ contains
an $r,t$-paths, for every edge $e\in E$ and 
every terminal $t\in S$, with probability at least $1-1/n$.
%
\end{proof}

\section{Conclusions}
\label{sec:conclusions}

We presented a non-trivial approximation algorithm for \tdst on general graphs. 
Our approach crucially relies on a decomposition of a
feasible solution into two divergent Steiner trees.
It is known that an analogous decomposition does not exist for 
connectivity $k\geq 3$ \cite{Huck95,BK11}. 
However, weaker decomposition theorems would be sufficient to exploit our basic approach. 
For example, is it possible to decompose a feasible solution to
\kdst into a collection of $f(k)\cdot\polylog(n,h)$ trees so that, 
for each terminal $t\in S$ and for any edge-cut $F$ of size $k-1$, 
there exists some tree in the collection that connects $r$ to $t$
using no edges from $F$?
Such a result could be combined with our LP-rounding technique to achieve similar approximation ratios for any constant $k$. We are not aware of any such result nor of any counter-example.
To support our conjecture, we show in Appendix~\ref{sec:application-kgst} the existence of a weaker decomposition in undirected graphs that supports connectivity $\lceil k/2 \rceil$. Such decomposition allows us to design a bi-criteria approximation algorithm for a variant of \kgst, namely \kgstx,
where all the $k$ edge-disjoint paths must end at the same vertex.

Achieving a sub-polynomial approximation for \tdst in polynomial time is another obvious open problem.
However, this has been a major open problem for decades even for \dst. On the positive side, it is likely that any future progress on \dst can be extended to \tdst via our approach.
\paragraph{Acknowledgment.}
We thank R. Ravi for suggesting the variant of \kgst.


\bibliographystyle{abbrv}
\bibliography{2dst}

\newpage

\appendix

\section{A Reduction from \tdss to \tdst}
\label{sec:fromDSTtoDSS}

It is known that an approximation algorithm for \kdst
implies an approximation algorithm for \kdss for both
edge and vertex connectivity versions
\cite{KR96,L15-sskcon,N12-sskcon}.
The reductions of these two cases are slightly different,
but they are based on the same technique.

\paragraph{Edge-Connectivity.}

We first consider the edge-connectivity version of \tdss and \tdst.
It is known that an $\alpha$-approximation algorithm for \kdst 
yields an approximation algorithm for \kdss with a loss of
factor two \cite{KR96}.
To be precise, let $G$ be an input graph of \kdss and
$S\subseteq V(G)$ be a set of terminals,
and let $\mathcal{A}$ be an $\alpha$-approximation algorithm for \kdst.
We form an instance of \kdst by taking an arbitrary terminal $r\in
S$ as a root vertex of \kdst and taking $S'=S\setminus \{r\}$ as a
set of terminals. 
Then we solve in-rooted-version and out-rooted-version of \kdst,
separately, and take the union of the two solutions.
Thus, every terminal $t\in S\setminus\{r\}$ has 
$k$ edge-disjoint paths to and from the root. 
It then follows by the transitivity of edge-connectivity
that there are $k$ edge-disjoint paths joining every pair
of terminals. 
Therefore, this gives a $2\alpha$-approximation algorithm
for \kdss.

\paragraph{Vertex-Connectivity.}

Now, we consider the case of vertex-connectivity of \kdss 
and \kdst.
The reduction is more involved than the case of edge-connectivity
since vertex-connectivity does not have the transitivity property.
Here we need to pay an extra factor of $k^2$.
In particular, as shown in \cite{KR96,L15-sskcon,N12-sskcon},
an $\alpha$-approximation algorithm for \kdst implies
$(2\alpha+k^2)$-approximation algorithm for \kdss.

The reduction is as follows.
Let $G$ be an input graph of \kdss and
$S\subseteq V(G)$ be a set of terminals,
and let $\mathcal{A}$ be an $\alpha$-approximation algorithm for \kdst.
First, we take any subset $R$ of $k$ vertices from $S$.
The we apply any (efficient) min-cost $k$-flow algorithms on every
pair of vertices in $R$ and obtain an set of edges $E'$.
We then form an instance of the vertex-connectivity version of 
\kdst by adding an auxiliary vertex $r$ as a root and joining
it to every vertex of $R$, and then
taking $S'=S\setminus R$ as a set of terminals.
We apply the algorithm for \kdst for both in-version and
out-version and then take the union of these solutions with $E'$
(that we obtained from the min-cost $k$-flow algorithm).
It is not hard to see that the cost of the solution is at most
$(\alpha+k^2)$ times the optimum, and the feasibility can be verified using a
cut-based argument. (See \cite{L15-sskcon} for more details.)
%


\section{Bicriteria Approximation Algorithm for a Variant of \kgst}
\label{sec:application-kgst}

In this section, we present an application of our algorithm for \tdst\ to a variant of \kgst proposed by Gupta~et~al.~\cite{GKR10}. Recall that in \kgst we wish to find a min-cost subgraph 
$H$ of a weighted undirected graph $G=(V,E)$ that has $k$ edge-disjoint paths from a given root $r$ to each \emph{group} $S_t\subseteq V$, 
$t=1,2,\ldots,h$.
In the mentioned variant, we require that all such paths end at the same vertex $s_t\in S_t$. We refer to this problem as \kgstx. Gupta~et~al. presented an $O(\log^3n\log^h\log\log n)$ approximation for the case of \tgstx, and
the algorithm of Chalermsook~et~al.\cite{CGL15} gives a bicriteria approximation algorithm\footnote{In \cite{CGL15}, the authors considered the standard version of \kgst, but the algorithm also works for the variant of \kgstx.} that provides connectivity $\Omega(k/\log n)$. We present an alternative bicriteria approximation
algorithm that outputs a solution with cost
at most $O(k\cdot D^3\log D\cdot h^{2/D}\cdot \log n)$ times the optimum
and provides connectivity at least $\lceil k/2 \rceil$.

Our algorithm is now based on decomposing the optimal solution
to \kgstx into a collection of k trees $T_1,\ldots,T_{k}$ such that
for any set of edges $F$ of size $\lceil k/2\rceil-1$ in $G$,
there exists a tree $T_{i}$ that contains no edge in $F$.
The algorithm and analysis then follow along the same line
as that for \tdst.
But, we need to run the outer loop of the rounding procedure 
(Step~\ref{algo:tdst:cut-avoiding} of Algorithm~\ref{algo:main-algo})
for $O(kD\ln{n})$ times instead of $O(D\ln{n})$
because the number of edge-cuts that we have to consider is now $n^{O(k)}$. 
This incurs an extra factor of $O(k)$ in the approximation
guarantee.

It remains to show that the above decomposition exists.
Observe that an optimal solution $H$ to \kgst forms
a graph that is $k$-edge-connected on the set
$S^*=\{r,s_1,\ldots,s_h\}$.
Thus, if we replace each undirected edge $\{u,v\}$ by
two directed edges $uv$ and $vu$, then we have a
directed graph $\hat{H}$ such that $S^*$ 
is $k$ (strongly) edge-connected on $\hat{H}$.
We may apply a splitting-off theorem to get rid of all the
Steiner vertices (vertices in $V(H)\setminus S^*$),
resulting in a directed $k$-edge-connected graph $\hat{H}'$
whose vertex-set is $S^*$.
Then, by Edmonds' Disjoint Arborescence Packing Theorem
\cite{Edmonds73},
we know that $\hat{H}'$ contains $k$ edge-disjoint (out) spanning 
arborescences $\hat{T}'_1,\ldots,\hat{T}'_k$. 
Mapping them back to the original graph $H$,
we have a collection of $k$ trees 
$\mathbb{T}=\{T_1,\ldots,T_k\}$ such that any undirected edge $\{u,v\}$ 
is contained in at most two trees in $\mathbb{T}$.
So, for any set of $\lceil k/2\rceil-1$ edges $F\subseteq E(H)$, there must exist a
tree $T_i\in\mathbb{T}$ that contains no edge of $F$
and connects every terminal to the root.
Therefore, we have the decomposition as claimed.

Note that this is an evidence that a weaker version of 
the decomposition theorem (Theorem~\ref{thm:independent-tree})
might exist.
The decomposition implies the following theorem.

\begin{theorem}[Bicriteria \kgstx]
\label{thm:bicriteria-kdstx}
For any $D\in[\log_2{h}]$, there exists a randomized approximation
algorithm that runs in $n^{O(D)}$ time
and outputs a feasible solution $H$ to \kgstx
with cost $O(k\cdot D^3\log D\cdot h^{2/D}\cdot \log n)$ 
times the optimum
and provides connectivity at least $\lceil  k/2 \rceil$.
\end{theorem}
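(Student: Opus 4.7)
The plan is to mirror the \tdst approach of Sections~\ref{sec:shallow}--\ref{sec:analysis}, replacing the two divergent Steiner trees by a decomposition into $k$ trees and enlarging the union-bound budget to handle cuts of size $\lceil k/2\rceil-1$. The starting point is the decomposition sketched in the preceding paragraphs: given an optimal solution $H$ to \kgstx with designated endpoints $s_1,\ldots,s_h$, bi-direct every edge to form $\hat{H}$, observe that $S^*=\{r,s_1,\ldots,s_h\}$ is $k$-edge-connected in $\hat{H}$, invoke a directed splitting-off theorem (\`a la Mader) to eliminate Steiner vertices while preserving $k$-edge-connectivity on $S^*$, and apply Edmonds' arborescence packing theorem to the result to obtain $k$ edge-disjoint out-arborescences rooted at $r$. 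Undoing the splittings yields $k$ trees $T_1,\ldots,T_k$ in $H$ such that every undirected edge of $H$ belongs to at most two of them. Hence, for any edge set $F\subseteq E(H)$ with $|F|\leq \lceil k/2\rceil-1$, the total multiplicity of $F$ across the $k$ trees is at most $2|F|\leq k-1<k$, so by pigeonhole at least one $T_i$ is entirely disjoint from $F$ and still connects $r$ to every $s_t$.

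Next I would instantiate the LP framework of Section~\ref{sec:LP} with this decomposition in mind. The shallow tree $\hat{T}$ and the vertex mapping $\psi$ are constructed exactly as before by listing root-to-length-$D$ sequences; for each group $S_t$ one may either pre-guess the endpoint $s_t$ (losing an $n$ factor in the LP size, harmless since the algorithm already runs in $n^{O(D)}$) or collapse the group to a super-sink, thereby reducing \kgstx to a rooted-terminal formulation. The constraints then parallel Figures~\ref{fig:LP2GST}--\ref{fig:LP_div} with three changes: the right-hand side of the flow-balance constraint in $LP_{gst}$ becomes $k$ instead of $2$; the congestion bound $\sum_{\hat{e}} f_{\hat{e},e}\leq \beta x_e$ still holds with $\beta=O(D h^{1/D})$, because applying the Height Reduction Theorem to each of the $k$ decomposition trees (each of which uses $e$ with multiplicity at most $2$) gives total congestion $2\beta'$; and the divergency constraint $\sum_{\hat{e}} f^t_{\hat{e},e}\leq x_e$ guarantees that the $k$ flow paths in $\hat{T}$ can be realised as $k$ edge-disjoint $r,s_t$-paths in $G$.

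The rounding of Section~\ref{sec:algo} then transfers almost verbatim: GKR rounding produces $\hat{H}_j\subseteq\hat{T}$, and each selected edge $\hat{e}$ is mapped to $O(\beta\log D)$ sampled paths from $\distP_{\hat{e}}$. The expected cost bound of Lemma~\ref{lem:expected-cost} is unchanged per iteration. For feasibility, Menger's theorem reduces the task to showing, for every terminal $t$ and every edge set $F\subseteq E$ with $|F|\leq \lceil k/2\rceil-1$, that $H\setminus F$ contains an $r,s_t$-path. I would redefine $\hat{e}$ to be \emph{good against $F$} when $\hat{x}_{\hat{e}}-\sum_{e\in F}f_{\hat{e},e}\geq \frac{1}{2\beta}\sum_{e\in F}f_{\hat{e},e}$, and prove an analogue of Lemma~\ref{lem:flow-on-good-paths}: after deleting bad edges and reducing each good edge's capacity by $\sum_{e\in F}f_{\hat{e},e}$, the residual network still supports an $\hat{r},\hat{S}_t$-flow of value $\Omega(1)$. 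The key calculation follows the same max-flow--min-cut template, absorbing the bound $\sum_{e\in F}\sum_{\hat{e}}f_{\hat{e},e}\leq \beta(\lceil k/2\rceil-1)$ into the slack provided by starting with flow value $k$ rather than $2$ (and noting we only need survival of a constant fraction). Lemma~\ref{lem:good-path-integral} transfers verbatim since it is local to a single good path.

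The main obstacle, and the source of the extra $O(k)$ factor in both the cost and the iteration count, is the union bound over failure events. Because we must simultaneously avoid every cut $F$ of size up to $\lceil k/2\rceil-1$ rather than a single edge, the number of bad events grows to $n^{O(k)}$. To drive the overall failure probability below $1/\poly(n)$, the outer loop of Algorithm~\ref{algo:main-algo} must be repeated $O(kD\ln n)$ times instead of $O(D\ln n)$, multiplying the per-iteration expected cost $O(\beta^2\log D)=O(D^2h^{2/D}\log D)$ by $O(kD\log n)$ and yielding the claimed bound $O(k\cdot D^3\log D\cdot h^{2/D}\cdot\log n)$. The $n^{O(D)}$ running time and the Menger-based conclusion that the output achieves connectivity at least $\lceil k/2\rceil$ are then inherited directly from the analogues of Lemmas~\ref{lem:expected-cost} and~\ref{lem:output-two-connected}.
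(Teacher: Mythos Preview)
Your proposal follows essentially the same route as the paper's own argument: the same decomposition into $k$ trees via bi-directing, splitting-off, and Edmonds' arborescence packing; the same pigeonhole observation that any $F$ with $|F|\le\lceil k/2\rceil-1$ is avoided by some $T_i$; the same transplant of the \tdst LP/rounding machinery; and the same source of the extra $O(k)$ factor, namely running the outer loop $O(kD\ln n)$ times to union-bound over $n^{O(k)}$ cut sets. The paper's proof is in fact only a two-paragraph sketch at precisely this level of detail, so your expansion is, if anything, more explicit.

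One technical slip worth fixing: you state the divergency constraint as $\sum_{\hat{e}} f^t_{\hat{e},e}\leq x_e$ and claim it realises \emph{$k$ edge-disjoint} $r,s_t$-paths. Neither is right for this decomposition: since each undirected edge of $H$ may appear in \emph{two} of the $k$ trees, the integral witness only satisfies $\sum_{\hat{e}} f^t_{\hat{e},e}\leq 2x_e$, and the $k$ root-to-$s_t$ paths are not pairwise edge-disjoint (that is precisely why the guarantee drops to $\lceil k/2\rceil$). With the corrected right-hand side $2x_e$, the Lemma~\ref{lem:flow-on-good-paths} arithmetic you sketch --- ``start from $k$, subtract the divergency and bad-edge losses'' --- gives $k-2|F|-|F|/2$, which is \emph{not} bounded below by a positive constant when $|F|=\lceil k/2\rceil-1$ and $k$ is large. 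So the ``absorbing the bound \ldots\ into the slack'' sentence does not go through as written; one needs either a finer per-tree LP formulation or a different good/bad threshold. The paper's sketch is silent on this point as well, so you are not diverging from it, but it is a genuine detail that remains to be filled in.
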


\medskip

\noindent{\bf Remark.}
For the case of \tgst, our algorithm gives a ``true'' approximation
algorithm for both \tgst and \tgstx.
To see this, we split each (undirected) edge $\{u,v\}$ of the input
graph into two directed edges $uv$ and $vu$ with the same cost.
We then add a terminal $s_{t}$ for each group $\hS_t$ and joining each
vertex $v\in \hS_t$ to $s_t$ by a (directed) edge $vs_t$ with
zero-cost. 
This reduces \tdst to \tgst, but we have a small issue
that the composition in Theorem~\ref{thm:independent-tree}
may give trees $T_1$ and $T_2$ such that the corresponding 
two edge-disjoint $r,s_t$-paths $P_1$ and $P_2$ contain both
$uv$ and $vu$ edges. 
However, we may use a stronger form of Theorem~\ref{thm:independent-tree} where we additionally require
that the paths $P_1$ and $P_2$ are \emph{strongly} divergent, i.e.,
only one of $uv$ and $uv$ can be contained in $E(P_1)\cup E(P_2)$
\cite{GeorgiadisT16,Kovacs07}.
Our approximation guarantee matches the results in \cite{GKR10}
(albeit, with worse running time).

\end{document}